\colorlet{XXX}{Emerald} \colorlet{mylinkcolor}{XXX}
\colorlet{mycitecolor}{XXX}
\colorlet{myurlcolor}{XXX}
\newcommand{\SEQ}{\textsc{Seq}}
\newcommand{\MSET}{\textsc{MSet}}
\renewcommand{\P}{\mathbb{P}}
\newcommand{\E}{\mathbb{E}}
\newcommand{\Var}{{\normalfont\text{Var}}}  
\newcommand{\Z}{\mathbb{Z}}
\newcommand{\C}{\mathbb{C}}
\newcommand{\R}{\mathbb{R}}
\newcommand{\GEO}{\normalfont\text{Geometric}}
\newcommand{\POIS}{\normalfont\text{Poisson}}
\newcommand{\NB}{\normalfont\text{NegativeBinomial}}
\newcommand{\DEF}{=}
\newcommand{\todox}[1]{}  
\newcommand{\cA}{\mathcal{A}} \newcommand{\cB}{\mathcal{B}}
\newcommand{\cC}{\mathcal{C}} 
\newcommand{\cE}{\mathcal{E}} 
 \newcommand{\cH}{\mathcal{H}}
\newcommand{\cM}{\mathcal{M}}
 \newcommand{\cZ}{\mathcal{Z}}
\DeclarePairedDelimiter{\abs}{\lvert}{\rvert}
\DeclarePairedDelimiter{\set}{\{}{\}}
\DeclarePairedDelimiter{\parens}{(}{)}
\DeclarePairedDelimiter{\bracks}{[}{]}
\DeclarePairedDelimiter{\floor}{\lfloor}{\rfloor}
\DeclarePairedDelimiter{\ceil}{\lceil}{\rceil}
\theoremstyle{plain}
\newtheorem{theorem}{Theorem}[section]
\newtheorem{lemma}[theorem]{Lemma}
\newtheorem{proposition}[theorem]{Proposition}
\theoremstyle{definition}
\newtheorem{definition}[theorem]{Definition}
\newcommand*\wthelper[2]{%
        \hbox{\dimen@\accentfontxheight#1%
                \accentfontxheight#11.1\dimen@
                $\m@th#1\widetilde{#2}$%
                \accentfontxheight#1\dimen@
        }%
}
\newcommand*\accentfontxheight[1]{%
        \fontdimen5\ifx#1\displaystyle
                \textfont
        \else\ifx#1\textstyle
                \textfont
        \else\ifx#1\scriptstyle
                \scriptfont
        \else
                \scriptscriptfont
        \fi\fi\fi3
}
\begin{document}

\title{Analyzing Boltzmann Samplers for Bose--Einstein Condensates with
Dirichlet Generating Functions}

\author{
Megan Bernstein\thanks{
  School of Mathematics, Georgia Institute of Technology.
  Email: \href{mailto:bernstein@math.gatech.edu}{\texttt{bernstein@math.gatech.edu}}.
	Supported in part by  National Science Foundation grant DMS-1344199.
}
\and Matthew Fahrbach\thanks{
  School of Computer Science, Georgia Institute of Technology.
  Email: \href{mailto:matthew.fahrbach@gatech.edu}{\texttt{matthew.fahrbach@gatech.edu}}.
  Supported in part by a National Science Foundation Graduate Research Fellowship under grant DGE-1650044.
}
\and Dana Randall\thanks{
  School of Computer Science, Georgia Institute of Technology.
  Email: \href{mailto:randall@cc.gatech.edu}{\texttt{randall@cc.gatech.edu}}.
  Supported in part by  National Science Foundation grants CCF-1526900, CCF-1637031, and CCF-1733812.
}
}
\date{\today}
\maketitle

\begin{abstract}
Boltzmann sampling is commonly used to uniformly sample objects of a particular
size from large combinatorial sets. 
For this
technique to be effective, one needs to prove that (1) the sampling procedure
is efficient and (2) objects of the desired size are generated with
sufficiently high probability.
We use this approach to give a provably efficient sampling algorithm for a
class of weighted integer partitions related to Bose--Einstein condensation
from statistical physics.
Our sampling algorithm is a probabilistic interpretation of the ordinary
generating function for these objects,
derived from the symbolic method of analytic
combinatorics.
Using the Khintchine--Meinardus probabilistic method to bound the rejection
rate of our Boltzmann sampler through singularity analysis of 
Dirichlet generating functions, we offer an alternative approach to analyze
Boltzmann samplers for objects with multiplicative structure.
\end{abstract}

\pagenumbering{gobble}

\newpage

\pagenumbering{arabic}
\section{Introduction}\label{sec:intro}
Bose--Einstein condensation occurs when subatomic particles known as
bosons are cooled to nearly absolute zero. These particles behave as
waves due to their quantum nature and as their temperature decreases,
their wavelength increases.
At low enough temperature, the size of the waves exceeds
the average distance between two particles and a constant fraction of
bosons enter their ground state. These particles then coalesce into a single
collective quantum wave called a \emph{Bose--Einstein condensate} (BEC).
Incredibly, this quantum phenomenon can be observed at the macroscopic scale.

In statistical physics, such thermodynamic systems are often modeled
in one of three settings:
the \emph{microcanonical ensemble}, where both the number of particles
and the total energy in the system are kept constant,
the \emph{canonical ensemble}, where the number of particles is 
constant but the energy is allowed to vary,
and the \emph{grand canonical ensemble}, where both the number of particles
and energy can vary.
%
%
A substantial amount of research has focused on  understanding
the asymptotic behavior of BECs in the microcanonical and canonical ensembles~\cite{BP83,CD14}.
Here, we give a provably efficient algorithm for uniformly sampling BEC
configurations from the {\it microcanonical ensemble} in the {\it low-temperature regime}
when the number of particles is at least the total energy in the system.
This allows us to explore thermodynamic properties of systems with thousands
of non-interacting bosons instead of relying solely on its limiting behavior.

Random sampling is widely used across scientific disciplines when exact solutions are
unavailable.  In many settings, \emph{Boltzmann samplers} have proven
particularly useful for sampling combinatorial objects of a fixed size.  The
state space $\cC$ includes configurations of all sizes, and the {\it Boltzmann
distribution} assigns a configuration $\gamma \in \cC$ probability
$\P_{\lambda}(\gamma) = \lambda^k/Z$, where~$k$ is the size of $\gamma$, $Z$ is
the normalizing constant, and $\lambda \in \R_{> 0}$ is a parameter of the
system that biases the distribution toward configurations of the desired size.
Boltzmann sampling is effective if the sampling procedure is efficient on
$(\cC, \P)$ and \emph{rejection sampling} (i.e., outputting objects of the
desired size and rejecting all others) succeeds with high enough probability to
produce samples of the correct size in expected polynomial time.

A useful example to demonstrate the effectiveness of Boltzmann sampling is {\it
integer partitions}, which arise across many areas of mathematics and
physics~(see, e.g., \cite{ABCN15,James06,Oko02}) and  turn out to be closely
related to BECs.  An \emph{integer partition} of a nonnegative integer~$n$ is a
nonincreasing sequence of positive integers that sums to $n$.
The simplest method for uniformly generating random partitions
of~$n$ relies on exact counting.
Nijenhuis and Wilf~\cite{NW78}
gave a dynamic programming algorithm for
enumerating partitions of $n$, which naturally extends to a sampling
algorithm that requires $O(n^{2.5})$ time and space. However, no suitable
closed-form expression for the number of integer partitions of $n$ is known,
thus limiting the usefulness of this approach.
Alternatively, we can use Boltzmann sampling to generate integer partitions
that are biased to have size close to $n$, augmented with rejection sampling to
output only partitions of the desired size.
Arratia and Tavar{\'e}~\cite{AT94} showed that integer partitions and many
other objects with multiplicative generating functions can be sampled from Boltzmann
distributions using independent random processes.  
Duchon~et~al.~\cite{DFLS04} generalized their approach to a systematic Boltzmann
sampling framework 
using ideas from
analytic combinatorics, 
yielding a Boltzmann sampler for
integer partitions of varying size with time and space complexity linear in the size of
partition produced.
They suggest 
tuning $\lambda$ so that the expected size of the generated object is~$n$,
which empirically gives a useful rejection rate.
Leveraging additional symmetries of partitions, Arratia and DeSalvo~\cite{AD16}
gave a sampling algorithm that runs in expected $O(\sqrt{n})$ time and space.
Taking a completely different approach, Bhakta~et~al.~\cite{BCFR17} recently
gave the first rigorous Markov chain for sampling partitions, again utilizing
Boltzmann sampling to generate samples of a desired size.

While Boltzmann sampling has been shown to be quite effective on a
vast collection of problems in statistical physics and combinatorics, several
applications lack a proof that samples will be generated from close to the
Boltzmann distribution, and even more lack rigorous arguments showing that
rejection sampling will be efficient in expectation.   
In this paper we give an provably efficient Boltzmann sampler for 
BECs and rigorously bound its rejection rate through singularity
analysis of Dirichlet generating functions.
Our techniques naturally extend to many families of \emph{weighted partitions}, which
generalize integer partitions and BECs.
\subsection{Bose--Einstein Condensates}
We study Bose--Einstein condensation in an idealized microcanonical setting
with limited interactions between particles.
In the physics community this represents
configurations of zero-spin particles in a $d$-dimensional isotropic 
harmonic trap in the 
absence of two-body interactions and particle-photon interaction.
We focus on the case $d=3$, but the results generalize to higher dimensions. 
Combinatorially, we interpret BECs as weighted partitions
with $b_k = \binom{k+3-1}{3-1}$ types of summands of size $k$.
The $b_k$ summands correspond to the degenerate energy states of a boson with energy $k$.
We view such energy states as multisets of three different colors with cardinality~$k$.
BEC configurations are unordered collections of bosons and can be understood as 
multisets of bosons, or equivalently weighted partitions.
We represent BECs graphically by coloring Young diagrams.
Each column corresponds to the energy state of a particle,
and the columns are sorted lexicographically to give a partition of particles.
In the microcanonical ensemble with $m$ particles and energy $n$, the number of
particles in their ground state is $m$ minus the width of the Young diagram.
Bose--Einstein condensation occurs when the width of the average Young diagram
is at most a constant fraction of~$m$.

In the language of analytic combinatorics, BECs are the combinatorial class
$\MSET(\MSET_{\ge 1}(3\cZ))$.
For example, if $n=2$ there are 12 possible configurations.
When there is one particle with energy~2 we have
$\{\{1,1\}\}$, $\{\{1,2\}\}$, $\{\{1,3\}\}$, $\{\{2,2\}\}$, $\{\{2,3\}\}$, $\{\{3,3\}\}$,
and when there are two particles each with energy~1 we have
$\{\{1\},\{1\}\}$, $\{\{1\},\{2\}\}$, $\{\{1\},\{3\}\}$, $\{\{2\},\{2\}\}$
$\{\{2\},\{3\}\}$, $\{\{3\},\{3\}\}$.
If~$n=3$ there are 38 possible configurations:
10 if there is one particle with energy~3,
18 that emerge for the Young diagram with shape $(2,1)$
when one particle has energy~2 and one has energy~1 (\Cref{fig:weighted-partitions-demonstration}),
and 10 when there are three particles with energy~1.

\begin{figure*}
  \centering
  \includegraphics[width=0.98\textwidth]{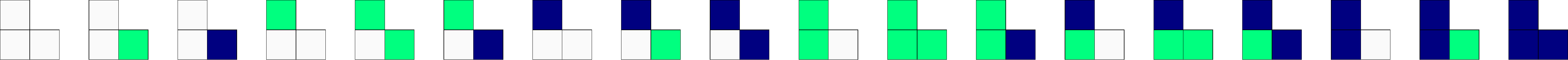}
  \caption{Young diagrams of Bose--Einstein condensates with shape $(2,1)$, where
           the colors (gray, green, blue) correspond to the numbers $(1,2,3)$.
           }
  \label{fig:weighted-partitions-demonstration}
\end{figure*}


\subsection{Results}
Our complexity analysis follows the conventions in~\cite{BFP10,FFP07} and assumes
a real-arithmetic model of computation, an oracle that evaluates a generating
function within its radius of convergence in constant time, and a root-finding
oracle.  We give a new method for rigorously analyzing algorithms that sample from the
Bose--Einstein distribution in low-temperature microcanonical ensembles
when the number of particles exceeds the total energy.\footnote{
  In this setting each weighted partition of size $n$ uniquely corresponds
  to a BEC configuration in the microcanonical ensemble.}
In particular, we give a
provably efficient algorithm for uniformly sampling BECs
by constructing a linear-time Boltzmann sampler using the framework
established in~\cite{FFP07}, and then bounding its rejection rate through
singularity analysis of an associated Dirichlet generating function.

\begin{restatable}{theorem}{newmaintheorem}
\label{thm:new-main-theorem}
There exists a uniform sampling algorithm for Bose--Einstein condensates of size~$n$
that runs in expected $O(n^{1.625})$ time and uses $O(n)$ space.
\end{restatable}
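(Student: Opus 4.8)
The plan is to instantiate the Boltzmann sampling framework of \cite{FFP07} for the class $\MSET(\MSET_{\ge 1}(3\cZ))$, tune the Boltzmann parameter so that the expected output size equals $n$, and then apply rejection sampling to obtain exact samples of size $n$. The total expected running time is the product of the per-trial cost and the expected number of trials $1/\P(N = n)$, where $N$ is the random size of a single Boltzmann-distributed sample. The crux is to pin down the precise order of $\P(N = n)$, which I would do through singularity analysis of the Dirichlet generating function of these weighted partitions, following the Khintchine--Meinardus method.

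First I would write the ordinary generating function $C(x) = \prod_{k \ge 1}(1-x^k)^{-b_k}$ with $b_k = \binom{k+2}{2}$ and read off the sampler: for each part size $k$ draw the number of parts $M_k \sim \NB(b_k, x^k)$ independently, then assign colors to these parts via a uniform weak composition of $M_k$ into $b_k$ nonnegative parts (the conditional law of the colors given $M_k$ is exactly uniform). Since any part of size $k > n$ forces rejection, a single trial iterates $k = 1, \dots, n$, drawing each $M_k$ in $O(1)$ time under the oracle model, aborting early once the accumulated size exceeds $n$, and recording the parts produced; this gives $O(n)$ time and $O(n)$ space per trial, i.e.\ a linear-time sampler. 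I would then use the root-finding oracle to choose $x = x_n = e^{-r_n}$ with $\E[N] = \sum_k k\, b_k x^k/(1-x^k) = n$, which is well defined because $\E[N]$ increases continuously from $0$ to $\infty$ on $(0,1)$.

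Next I would carry out the singularity analysis. The Dirichlet generating function is
\[
  D(s) = \sum_{k \ge 1} b_k k^{-s} = \tfrac12\bigl(\zeta(s-2) + 3\zeta(s-1) + 2\zeta(s)\bigr),
\]
whose rightmost singularity is a simple pole at $\rho = 3$ with residue $A = \tfrac12$, and which continues meromorphically to all of $\C$ with the polynomial vertical growth required by Meinardus' hypotheses. A Mellin-transform computation shows that $\log C(e^{-r})$ has transform $\Gamma(s) D(s) \zeta(s+1)$, so shifting to the pole at $s = \rho$ gives $\log C(e^{-r}) \sim \Gamma(\rho) A \zeta(\rho+1)\, r^{-\rho}$ as $r \to 0^+$. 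Differentiating in $r$ yields $\E[N] = -\tfrac{d}{dr}\log C(e^{-r}) \sim \Gamma(\rho+1) A \zeta(\rho+1)\, r^{-\rho-1}$ and $\Var(N) = \tfrac{d^2}{dr^2}\log C(e^{-r}) \asymp r^{-\rho-2}$. The tuning $\E[N] = n$ therefore forces $r_n \asymp n^{-1/(\rho+1)} = n^{-1/4}$ and $\Var(N) \asymp n^{(\rho+2)/(\rho+1)} = n^{5/4}$, and the Khintchine--Meinardus method upgrades this to a local limit theorem $\P(N = n) \sim (2\pi \Var(N))^{-1/2} \asymp n^{-5/8}$, so the expected number of rejection trials is $\Theta(n^{5/8})$.

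Combining the pieces, the total expected time is $O(n) \cdot \Theta(n^{5/8}) = O(n^{13/8}) = O(n^{1.625})$, and the space is $O(n)$ since only one partial sample is held at a time. The main obstacle is the local limit theorem: obtaining a genuine asymptotic for $\P(N = n)$ rather than a mere bound requires verifying the Meinardus conditions on $D(s)$ and, more delicately, establishing the minor-arc estimates showing that the characteristic function of $N$ is negligible away from the origin. This last step is the technical heart of the Khintchine--Meinardus argument, as it controls the error in the Fourier-inversion/saddle-point analysis that produces the $n^{-5/8}$ rate.
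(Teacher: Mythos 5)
Your overall architecture matches the paper's: a linear-time Boltzmann sampler for $\MSET(\MSET_{\ge 1}(3\cZ))$ tuned so that $\E[N]=n$, combined with rejection sampling whose expected number of trials is controlled by the Khintchine--Meinardus local limit theorem applied to $D(s)=\tfrac{1}{2}\bigl(\zeta(s-2)+3\zeta(s-1)+2\zeta(s)\bigr)$, with rightmost pole $\rho=3$, residue $\tfrac12$, and acceptance probability of order $n^{-5/8}$, giving $O(n)\cdot O(n^{5/8})=O(n^{1.625})$ expected time. Your per-trial sampler is implemented differently: you aggregate the $b_k$ independent geometrics for each part size $k$ into a single $\NB(b_k,\lambda^k)$ draw and then distribute colors via a uniform weak composition, whereas the paper composes the exp-log/Poisson multiset sampler (\Cref{alg:multiset-sampler}) with an inner negative-binomial sampler for $\MSET_{1..n}(3\cZ)$ (\Cref{alg:bec-subroutine}). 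Your aggregation is valid --- the joint law of i.i.d.\ geometrics conditioned on their sum is uniform over weak compositions --- but you must say how to sample a uniform multiset of size $M_k$ from $b_k=\Theta(k^2)$ types in time depending on $M_k$ rather than on $b_k$; a naive type-by-type pass costs $\sum_{k\le n} b_k=\Theta(n^3)$ per trial, which is exactly the generic bound of \Cref{thm:general-theorem} with $r=2$ that \Cref{thm:new-main-theorem} is designed to beat. (The paper sidesteps this by exploiting that each type is itself a $3$-colored multiset, reducing the inner step to a $\NB_{\ge 1}(3,\lambda^k)$ draw plus a uniform $3$-colored multiset.)

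The one genuine gap is the local limit theorem itself. Conditions (I) and (II) are routine from the zeta decomposition, but you leave the minor-arc estimate (condition (III)) as an acknowledged obstacle rather than proving it, and without it the asymptotic $\P(N=n)\sim(2\pi\Var(N))^{-1/2}$ --- hence the $n^{5/8}$ rejection rate --- is not established. The paper closes this in \Cref{lem:main-lemma} by using $b_k\ge 1$ to drop to the unweighted sum and invoking the Karatsuba--Voronin inequality
\[
  2\sum_{k=1}^m \sin^2(\pi k\alpha)\ \ge\ m\parens*{1-\min\parens*{1,\tfrac{1}{2m|\alpha|}}},
\]
with the choice $m=\ceil{1/(2|\alpha|)+1/\delta}$. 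You should either supply this argument or cite it explicitly. Two smaller omissions: the space bound you state is expected $O(n)$, while the paper adds a Markov-inequality truncation (reject partial objects once their size reaches $2n$, costing only a constant factor in the number of trials) to make it deterministic; and the paper's quoted rejection bound is a one-sided inequality obtained by lower-bounding the residue $A_r$ via Newton's forward differences (\Cref{lem:rejection-rate}), though for the specific BEC case your exact residue $A=\tfrac12$ suffices.
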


\noindent
This algorithm generates samples of size $n$ exactly from the uniform
distribution in expected polynomial time.  This
allows us, for example,  to rigorously study the expected width of Young diagrams
arising from random configurations 
(or, equivalently, the fraction of particles in a BEC in their ground state)
without relying on the limiting
properties given in~\cite{Yak2012}.

The singularity analysis used in the proofs generalizes to a broader family of
weighted partitions, including integer partitions and plane
partitions~\cite{BFP10}.
Let a \emph{positive integer sequence of degree~$r$} be a sequence of positive
integers $(b_k)_{k=1}^\infty$ such that $b_k = p(k)$ for some polynomial
\[
  p(x) = a_0 + a_1 x + \dots + a_r x^r \in \R[x],
\]
with $\deg(p) = r$.
We show how the rightmost pole of the Dirichlet generating function for
$(b_k)_{k=1}^\infty$ and its residue are related to $a_r$, the leading
coefficient of $p(x)$, which we then use to establish rigorous rejection rates
for Boltzmann sampling.

\begin{restatable}{theorem}{generaltheorem}
\label{thm:general-theorem}
There exists a uniform sampling algorithm for any class of weighted partitions
parameterized by a positive integer sequence of degree $r$ for
objects of size~$n$ that runs in expected time $O(n^{r+1 + (r+3)/(2r+4)})$ and
uses $O(n)$ space.
\end{restatable}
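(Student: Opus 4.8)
The plan is to build a Boltzmann sampler straight from the product form of the ordinary generating function, tune its parameter so the expected output size is $n$, and then control the rejection rate by reading off the dominant singular behavior of the associated Dirichlet series through the Khintchine--Meinardus method. A class of weighted partitions is $\MSET(\cB)$ for a class $\cB$ of summands having $b_k = p(k)$ objects of size~$k$, so its ordinary generating function factors as
\[
  C(x) = \prod_{k \ge 1} \frac{1}{(1-x^k)^{b_k}},
\]
and each factor is the generating function of a geometric number of copies of a single summand. Thus the Boltzmann sampler of parameter $x$ draws, independently over all pairs $(k,j)$ with $1 \le k \le n$ and $1 \le j \le b_k$, a count distributed as $\GEO(x^k)$ recording the multiplicity of the $j$-th summand of size~$k$; this is a valid Boltzmann sampler by the product/independence rules for $\MSET$. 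Drawing these $\sum_{k=1}^{n} b_k = O(n^{r+1})$ geometric variables dominates the per-trial running time, and since an accepted configuration is a partition of $n$ the space is $O(n)$.

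I would then tune the parameter and extract asymptotics. Writing $x = e^{-\delta}$ and letting $N$ denote the random output size, one has $\E_x[N] = \sum_{k \ge 1} b_k\, k x^k/(1-x^k)$ and $\Var_x[N] = \sum_{k \ge 1} b_k\, k^2 x^k/(1-x^k)^2$, and I would choose $\delta = \delta_n$ so that $\E_x[N] = n$. The behavior of these sums as $\delta \to 0^+$ is controlled by the Dirichlet generating function
\[
  D(s) = \sum_{k \ge 1} \frac{b_k}{k^s} = \sum_{j=0}^{r} a_j\, \zeta(s-j),
\]
whose rightmost pole is at $s = \rho := r+1$ with residue $a_r$, coming from the term $a_r \zeta(s-r)$. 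Applying the Mellin transform to $\log C(e^{-\delta}) = \sum_{k \ge 1}\sum_{m \ge 1} b_k\, e^{-mk\delta}/m$, whose transform is $\Gamma(s)\,\zeta(s+1)\,D(s)$, and collecting the residue at $s = \rho$ gives $\log C(e^{-\delta}) \sim \Gamma(\rho)\,\zeta(\rho+1)\,a_r\, \delta^{-\rho}$; differentiating in $\delta$ yields $\E_x[N] \asymp \delta^{-(\rho+1)}$ and $\Var_x[N] \asymp \delta^{-(\rho+2)}$. Hence $\delta_n \asymp n^{-1/(\rho+1)}$ and the standard deviation $\sigma_n := \sqrt{\Var_x[N]}$ satisfies $\sigma_n \asymp n^{(\rho+2)/(2(\rho+1))} = n^{(r+3)/(2r+4)}$.

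The heart of the argument is a local limit theorem showing that rejection succeeds with the right probability, namely $\P_x[N = n] = \Theta(1/\sigma_n)$, which is exactly where the Khintchine--Meinardus probabilistic method enters. I would write $\P_x[N=n]$ as a Fourier integral of the characteristic function $\E_x[e^{itN}] = \prod_{k \ge 1}\bigl((1-x^k)/(1-x^k e^{itk})\bigr)^{b_k}$ over $t \in [-\pi,\pi]$, split into a major arc $\abs{t} \le \delta_n^{1+\epsilon}$ and a minor arc, and on the major arc use the quadratic Taylor expansion supplied by $\E_x[N]=n$ together with the variance estimate to obtain a Gaussian integral of size $\Theta(1/\sigma_n)$. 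On the minor arc I would show $\abs{\E_x[e^{itN}]}$ is negligible relative to this, via a Mellin/Dirichlet estimate of $\sum_{k \ge 1} b_k \log\bigl\lvert 1 - x^k e^{itk}\bigr\rvert$. Granting this, the expected number of trials is $1/\P_x[N=n] = \Theta(\sigma_n)$, and multiplying by the per-trial cost $O(n^{r+1})$ gives the expected running time $O\!\left(n^{\,r+1+(r+3)/(2r+4)}\right)$, with $O(n)$ space.

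The main obstacle will be the minor-arc estimate, since it requires verifying the analytic hypotheses of Meinardus' theorem for an arbitrary positive integer sequence of degree~$r$: one must establish a region of analytic continuation and polynomial growth along vertical lines for $D(s)$, and convert these into a uniform bound certifying that $\abs{\E_x[e^{itN}]}$ is exponentially smaller than $\P_x[N=n]$ away from $t=0$. Because $D(s)$ is a finite combination of shifted Riemann zeta functions, these conditions should hold for every $r$, but tracking the residues and error terms --- and confirming that the implied constants do not degrade the stated exponent --- is the delicate part of the analysis.
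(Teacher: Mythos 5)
Your architecture matches the paper's: the same product-form Boltzmann sampler driven by $\sum_{k=1}^n b_k = O(n^{r+1})$ independent geometric multiplicities, the same tuning $\E_{\lambda_n}[U_n]=n$, the same observation that $D(s)=\sum_{j=0}^{r}a_j\,\zeta(s-j)$ has its rightmost simple pole at $\rho=r+1$ with residue $a_r$, and the same exponent arithmetic giving $\delta_n\asymp n^{-1/(r+2)}$ and $\sigma_n\asymp n^{(r+3)/(2r+4)}$, hence expected time $O(n^{r+1+(r+3)/(2r+4)})$. The difference is where the local limit theorem comes from: the paper does not re-derive it, but invokes the Granovsky--Stark result (Theorem~\ref{thm:local-limit}) after verifying its hypotheses (I)--(III) in \Cref{lem:main-lemma}, and then turns the asymptotic into the explicit bound of \Cref{lem:rejection-rate} by lower-bounding the residue via $a_r\ge 1/r!$ (an integer-valued polynomial has integer coefficients in the binomial basis).

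The genuine gap is exactly the step you flag and then assume: the minor-arc estimate. Everything after ``Granting this'' is conditional on a bound you never supply, and it is the only technically nontrivial part of the argument --- the major-arc Gaussian computation and the Mellin asymptotics are routine. The paper closes it concretely: condition (III) of the Khintchine--Meinardus method, which is precisely the uniform minor-arc decay statement, holds for \emph{every} positive integer sequence because one may discard all but the trivial bound $b_k\ge 1$ and apply the Karatsuba--Voronin inequality $2\sum_{k=1}^{m}\sin^2(\pi k\alpha)\ge m\bigl(1-\min\bigl(1,\tfrac{1}{2m|\alpha|}\bigr)\bigr)$ with $m=\lceil 1/(2|\alpha|)+1/\delta\rceil$, while conditions (I) and (II) follow from standard vertical-line growth bounds for $\zeta(s)$. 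If you intend to prove the local limit theorem yourself rather than cite it, an argument of this kind must actually be carried out; as written, the theorem is not proved. Two smaller points: in the paper's convention $\GEO(\lambda)$ has success probability $\lambda$, so the multiplicity of a size-$k$ summand is $\GEO(1-\lambda^k)$, not $\GEO(\lambda^k)$; and your space claim only covers accepted configurations --- a rejected trial can grow far beyond size $n$ before its size is known, which the paper handles by truncating partially built objects at size $2n$ and paying only a constant factor via Markov's inequality.
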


For a fixed degree $r$, the number of samples needed in expectation is
$O(n^{(r+3)/(2r+4)})$, which is asymptotically tight 
by Theorem~\ref{thm:local-limit}.
In particular, when $r=0$ (as in the case of integer partitions)
we need $O(n^{3/4})$ samples,
and as $r \rightarrow \infty$ the number of required samples converges to~$O(\sqrt{n})$.
Independently, DeSalvo and Menz~\cite{DM16} recently developed a new probabilistic model
that gives a central limit theorem for the same family of weighted
partitions and circumvents singularity analysis
of Dirichlet generating functions.
\subsection{Techniques}
Probabilistic interpretations of ordinary generating functions for weighted
partitions are useful for producing samples from Boltzmann distributions in
polynomial time~\cite{AT94,FFP07}, but rejection sampling is not always
guaranteed to be efficient.  We use the
\emph{Khintchine--Meinardus probabilistic method}~\cite{Gra16, GS12, GSE08}
to establish that rejection sampling is efficient
for a broad class of weighted partitions that includes BECs.
We also use the Boltzmann sampling framework based on the
symbolic method from analytic combinatorics
to give a improved algorithms for BECs, instead of simply using
a geometric random variable for each degenerate energy state
(Theorem~\ref{thm:new-main-theorem} vs. Theorem~\ref{thm:general-theorem} with~$r=2$).

The goal of the Khintchine--Meinardus probabilistic is to asymptotically
enumerate combinatorial objects through singularity analysis of Dirichlet
generating functions. For algorithmic purposes, it is useful for bounding
rejection rates.
Only recently were such techniques extended to
handle Dirichlet series with multiple poles on the real axis~\cite{GS12},
which is a necessary advancement for analyzing BECs and other
weighted partitions parameterized by non-constant integer sequences.
In this paper, we show that the Dirichlet series for classes of
weighted partitions parameterized by positive integer sequences are linear
combinations of shifted Riemann zeta functions, and thus amenable to singularity
analysis.
Using bounds for the Riemann zeta function from analytic number theory,
we show that the local limit theorem in~\cite{GS12} holds for BECs.
We also view the parameterizing polynomial $p(x)$ as a Newton-interpolating
polynomial to lower bound the residue of the rightmost pole of the Dirichlet
generating functions, which leads to a more convenient lemma for rejection
rates.

Additionally,
we develop a tail inequality
for the negative binomial distribution (\Cref{lem:d-root-convolution})
that is empirically much tighter than a Chernoff-type inequality 
when used in the analysis of a subroutine of the main sampling algorithm.
We also believe the
singularity analysis of Dirichlet series in this paper will be valuable for a
wide variety of sampling problems in computer science
when the objects of interest can be decomposed into non-interacting components
and when transfer theorems for ordinary generating functions do not work.



\section{Preliminaries}
We start by introducing fundamental ideas about Boltzmann samplers for
combinatorial classes. 
Then we use the symbolic method to define Bose--Einstein condensates
and present a local limit theorem from the
Khintchine--Meinardus probabilistic method for weighted partitions.
These are the main components of our analysis. 

\subsection{Boltzmann Sampling}
A combinatorial class $\cC$ is a finite or countably infinite set 
equipped with a size function $|\cdot| : \cC \rightarrow \Z_{\ge 0}$
such that 
the number of elements of any given size is finite.
For a given class $\cC$, let $c_n$ be the number of elements of size $n$.
The counting sequence of $\cC$ is the integer sequence
$(c_n)_{n=0}^\infty$, and the ordinary generating function of $\cC$ is
\[
  C(z) \DEF \sum_{n=0}^\infty c_n z^n \DEF \sum_{\gamma \in \cC} z^{|\gamma|}.
\]

\begin{definition}
The \emph{Boltzmann distribution} of a class $\cC$ 
parameterized by $\lambda \in (0, \rho_C)$
is the probability distribution, for all $\gamma \in \cC$, defined as
\begin{equation*}
  \P_{\lambda}(\gamma) \DEF \frac{\lambda^{|\gamma|}}{C(\lambda)},
\end{equation*}
where $\rho_C$ is the radius of convergence of $C(z)$.
\end{definition}

\begin{definition}
A \emph{Boltzmann sampler} $\Gamma C(\lambda)$ 
is an algorithm that generates objects from a class~$\cC$ according the
Boltzmann distribution with parameter $\lambda$.
\end{definition}

The size of an object generated by $\Gamma C(\lambda)$ is a random variable
denoted by $U$ with the probability distribution
\[
  \P_\lambda(U = n) = \frac{c_n \lambda^{n}}{C(\lambda)}.
\]
All objects of size $n$ occur with equal probability,
so if $\Gamma C(\lambda)$ returns an object $\gamma$ of size $n$, then $\gamma$
is a uniform random sample among all size $n$ objects in $\cC$.
Therefore, we can use \emph{rejection sampling} to 
generate objects of size $n$ uniformly at random.
For this technique to be effective, we need both
an efficient sampling algorithm over the pair $(\cC, \P_\lambda)$
and a provably low rate of rejection.

Assuming that $\cC$ is infinite and $c_0 \ne 0$,
we maximize the probability of generating an object of size~$n \ge 1$ 
by tuning the Boltzmann sampler so that $\E_{\lambda}[U] = n$ and
denote this solution by $\lambda_n$.
To see why this strategy works, observe that
\begin{equation}
\label{eqn:max-prob}
  \frac{\mathrm{d}}{\mathrm{d}\lambda} \P_{\lambda}\parens*{U = n}
  = \frac{c_n \lambda^{n-1}}{C(\lambda)} \parens*{n - \E_{\lambda}[U]},
\end{equation}
where $\E_\lambda[U] = \lambda C'(\lambda) / C(\lambda)$.
Because $\cC$ contains objects of varying size,
we have
\begin{equation}
\label{eqn:strictly-increasing}
  \frac{\mathrm{d}}{\mathrm{d} \lambda} \E_\lambda[U] = \frac{\Var_\lambda(U)}{\lambda} > 0.
\end{equation}
Thus $\E_{\lambda}[U]$ is strictly increasing,
so $\lambda_n$ is unique.
Together with \Cref{eqn:max-prob},
this implies that  $\P_\lambda(U=n)$ is maximized at $\lambda_n$.
The equality in \Cref{eqn:strictly-increasing} is a property of Boltzmann
distributions~\cite{DFLS04},
and the inequality is true
because $\lambda \in (0, \rho_C)$ and $U$ is a nonconstant random variable.

\subsection{Symbolic Method}
We use the symbolic method of analytic combinatorics to construct
the class of Bose--Einstein condensates
and then utilize the Boltzmann sampling framework developed in~\cite{DFLS04,Fla07,FFP07}.
The primitive combinatorial classes in the symbolic method 
are the neutral class $\cE$ and
the atomic class~$\cZ$.
The class~$\cE$ contains a single element of
size~$0$ called a neutral object,
and the class~$\cZ$ contains a single element of size~$1$ called an atom.
Neutral objects are used to mark objects as different, and
atoms are combined to form combinatorial objects.
We can express a rich family of discrete structures
using these primitive classes with the following operators.

\begin{definition}
The \emph{combinatorial sum} of $\cA$ and $\cB$ is
\[
  \cC = \cA + \cB \DEF \parens*{\cE_1 \times \cA}
  \cup \parens*{\cE_2 \times \cB},
\]
where $\cE_1$ and $\cE_2$ are different neutral classes.
The size of an element in $\cC$ is the same as in its class of origin,
and the ordinary generating function for $\cC$ is $C(z) = A(z) + B(z)$.
\end{definition}

\begin{definition}
The \emph{Cartesian product} of $\cA$ and $\cB$ is
\[
  {\cC = \cA \times \cB \DEF \{(\alpha,\beta) : \alpha \in \cA, \beta \in \cB\}}.
\]
The size of the pair $\gamma = (\alpha,\beta) \in \cC$ is defined as
$|\gamma|_{\cC} \DEF |\alpha|_{\cA} + |\beta|_{\cB}$, and
the generating function for~$\cC$ is $C(z) = A(z)B(z)$.
\end{definition}

\begin{definition}
The \emph{sequence operator} of a class $\cB$ with $b_0 = 0$
is the infinite sum
\[
  \cC = \SEQ(\cB) \DEF \cE + \cB + (\cB \times \cB) +
  (\cB \times \cB \times \cB) + \cdots,
\]
and the generating function for $\cC$ is
\begin{equation*}
  C(z) = 1 + B(z) + B(z)^2 + B(z)^3 + \dots = \frac{1}{1-B(z)}.
\end{equation*}
\end{definition}

\begin{definition}
The \emph{multiset operator} of a class $\cB$ with $b_0 = 0$ is
\[
  \cC = \MSET(\cB) \DEF \prod_{\beta \in \cB} \SEQ(\{\beta\}),
\]
and the generating function for $\cC$ is
\begin{align}
\label{eqn:ac-multiset-gf-def}
  C(z) &= \prod_{\beta \in \cB} \parens*{1-z^{|\beta|}}^{-1}\\ \nonumber
  &= \prod_{k=1}^\infty \parens*{1 - z^k}^{-b_k}\\
  &= \prod_{k=1}^\infty \exp\parens*{ \frac{1}{k} B\parens*{z^k}}. \nonumber
\end{align}
The final equality is an exp-log transform~\cite{FS09}.
\end{definition}

Multisets are essential to our analysis because they demonstrate how
Bose--Einstein condensates decompose into combinatorial atoms.
Bose--Einstein condensates and integer partitions are generalized by weighted
partitions, which are the central objects in the Khintchine--Meinardus
probabilistic method.

\begin{definition}
The class $\cC$ of \emph{weighted partitions} 
with $b_k$ different types of summands of size~$k \ge 1$ is implicitly defined by
the generating function
\[
C(z) \DEF \sum_{n=0}^\infty c_n z^n
  \DEF \prod_{k=1}^\infty \parens*{1-z^k}^{-b_k}.
\]
Equivalently,
$\cC = \MSET(\cB)$ is parameterized by the class $\cB$ of permissible summands. 
\end{definition}

From here
onward, we use $\cC$ to denote classes of weighted partitions and
$\cB$ to denote the corresponding class of summands.
For uniform sampling, it is beneficial to work with the truncated
class of weighted partitions~$\cC_{n}$ with the generating function
\[
  C_{n}(z) \DEF \prod_{k=1}^n \parens*{1 - z^k}^{-b_k},
\]
since it completely contains the target set of objects of size $n$.  
Analogously, we define a random variable for the size of an object
produced by a Boltzmann sampler for the truncated class $\cC_n$.

\begin{definition}\label{def:U}
Let $U_n$ denote the random variable for the size of an object
generated by $\Gamma C_{n}(\lambda)$.
\end{definition}

Equipped the notions of weighted partitions and the symbolic method, we can easily
construct Bose--Einstein condensates in a way that is amenable to efficient 
uniform sampling.

\begin{definition}
\emph{Bose--Einstein condensates} are weighted partitions 
with the parameters
\[
  b_k = \binom{k+2}{2}.
\]
In the language of analytic combinatorics 
they are the combinatorial class $\MSET(\MSET_{\ge 1}(3\cZ))$.
\end{definition}

\noindent
The parameterizing class $\MSET_{\ge 1}(3\cZ)$ is the set of
all nonempty multisets of $3$ different colored atoms.
There are $\binom{k+3-1}{3-1}$ such multisets of size $k$, each corresponding
to a different summand.
From a physics point of view,
multisets of size $k$ in $\MSET_{\ge 1}(3\cZ)$ are isomorphic to the
three-dimensional degenerage energy states of a boson with energy $k$.
A Bose--Einstien condensate is an unordered collection of bosons,
so an object of size~$n$ in $\MSET(\MSET_{\ge 1}(3\cZ))$ uniquely corresponds to
a Bose--Einstein condensate with total energy $n$.

\subsection{\hspace{-0.30cm}Khintchine--Meinardus \hspace{-0.20cm}Probabilistic \hspace{-0.20cm}Method}

Under somewhat restrictive conditions,
Meinardus~\cite{Mei54} established an asymptotic equivalence between
the number of weighted partitions $c_n$
and the analytic behavior of the Dirichlet series 
\begin{equation*}
  D(s) \DEF \sum_{k=1}^\infty b_k k^{-s},
\end{equation*}
using the saddle-point method~\cite{FS09},
where $s = \sigma + it$ is a complex variable.
Granovsky, Stark, and Erlihson~\cite{GSE08}
extended Meinardus' theorem to new multiplicative combinatorial objects
using Khintchine's probabilistic method~\cite{Khi98}.
Granovsky and Stark \cite{GS12} later generalized their 
results to weighted partitions such that
$D(s)$ has multiple singularities on the positive real axis,
which includes the class of Bose--Einstein condensates.
To use the Khintchine--Meinardus probabilistic method for 
weighted partitions, we must show that the series $D(s)$ satisfies the following
conditions:


\begin{enumerate}[(I)]
\item
The Dirichlet series $D(s)$ has $r \ge 1$
simple poles at real positions $0 < \rho_1 < \rho_2 < \dots < \rho_r$ with
positive residues $A_1, A_2, \dots, A_r$, and it
is
analytic in the half-plane $\sigma > \rho_r > 0$.
Moreover, there is a constant $0 < C_0 \le 1$ such that the function
$D(s)$ has a meromorphic continuation to the half-plane
\[
  \cH = \set*{s : \sigma \ge -C_0},
\]
on which it is analytic except for above the $r$ simple poles.

\item
There is a constant $C_1 > 0$ such that
\[
  D(s) = O\parens*{|t|^{C_1}},
\]
uniformly for $s = \sigma + it \in \cH$, as $t \rightarrow \infty$.

\item
For $\delta > 0$ small enough and some $\varepsilon > 0$,
\[
  2\sum_{k=1}^\infty b_k e^{-k\delta} \sin^2(\pi k \alpha)
    \ge \cM \parens*{1 + \frac{\rho_r}{2} + \varepsilon}
  \abs{\log \delta},
\]
for all $\sqrt{\delta} \le \abs{\alpha} \le 1/2$
and $\cM = 4/\log 5$.
\end{enumerate}

The local limit theorem that follows is the crux of the proof for \cite[Theorem~1]{GS12}.
We use it to prove asymptotically tight rejection rates for our Boltzmann
sampler as a function of the rightmost pole of $D(s)$.
In the statement,
$\Gamma(z)$ is the gamma function and $\zeta(s)$ is the Riemann zeta function.

\begin{theorem}[{\cite{GS12}}]\label{thm:local-limit}
If conditions (I)--(III) hold, 
\begin{align*}
  \P_{\lambda_n}\parens*{U_n = n} &\sim \frac{1}{\sqrt{2\pi\Var\parens*{U_n}}}\\
                                  &\sim \frac{1}{\sqrt{2\pi K_2}} \parens*{\frac{K_2}{\rho_r+1}}^{\frac{2+\rho_r}{2(\rho_r+1)}}
  n^{-\frac{2+\rho_r}{2(\rho_r + 1)}},
\end{align*}
as $n \rightarrow \infty$, where $K_2 = A_r \Gamma(\rho_r + 2)\zeta(\rho_r + 1)$ is a
  constant.
\end{theorem}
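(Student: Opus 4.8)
The plan is to follow the \emph{Khintchine--Meinardus probabilistic method} used in \cite{GS12}, treating $\P_{\lambda_n}(U_n = n)$ as a Fourier integral and applying a local central limit (saddle-point) analysis. First I would record the probabilistic interpretation of the Boltzmann sampler $\Gamma C_n(\lambda)$: since $\cC_n = \MSET(\cB)$ has $b_k$ summand types of each size $k \le n$ and each type of size $k$ contributes $\SEQ$-many copies, the sampler draws an independent geometric multiplicity with parameter $\lambda^k$ for each type, so that $U_n = \sum_{k=1}^n k \sum_{j=1}^{b_k} G_{k,j}$ with the $G_{k,j}$ independent and geometric. Because $U_n$ is a sum of independent integer variables, its characteristic function factors,
\[
  \varphi(\theta) \DEF \E_{\lambda}\bracks*{e^{i\theta U_n}} = \prod_{k=1}^n \parens*{\frac{1 - \lambda^k}{1 - \lambda^k e^{ik\theta}}}^{b_k},
\]
and Fourier inversion on the integer lattice gives
\[
  \P_{\lambda}(U_n = n) = \frac{1}{2\pi}\int_{-\pi}^{\pi} \varphi(\theta)\, e^{-in\theta}\, d\theta.
\]
Writing $\lambda = \lambda_n = e^{-\delta}$ and substituting $\theta = 2\pi\alpha$, the exponent that controls $\abs{\varphi}$ is exactly the trigonometric sum appearing in condition~(III), since $\abs{1 - \lambda^k e^{ik\theta}}^2 = (1-\lambda^k)^2 + 4\lambda^k\sin^2(\pi k\alpha)$.

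Next I would split the integral into a \emph{major arc} $\abs{\alpha} < \sqrt{\delta}$ and a \emph{minor arc} $\sqrt{\delta} \le \abs{\alpha} \le 1/2$. On the major arc I would expand $\log\varphi$ to second order in $\alpha$. Conditions~(I) and~(II) let me evaluate the cumulants of $U_n$ through the Mellin transform: each cumulant is a harmonic sum $\sum_k b_k f(k\delta)$ whose Mellin representation $\frac{1}{2\pi i}\int f^*(s) D(s)\delta^{-s}\,ds$ is evaluated by shifting the contour to $\sigma = -C_0$ and collecting residues, with the rightmost pole $\rho_r$ of $D(s)$ dominating as $\delta \to 0$ (condition~(II) supplies the polynomial vertical growth needed to justify the shift against the exponential decay of $\Gamma(s)$). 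Tuning $\lambda_n$ so that $\E_{\lambda_n}[U_n] = n$ cancels the linear term against the factor $e^{-in\theta}$, leaving a Gaussian integrand; extending the truncated Gaussian integral to $\R$ (justified because $\Var(U_n)\to\infty$) yields the leading term $\frac{1}{\sqrt{2\pi\Var(U_n)}}$.

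The main obstacle is the minor arc, which is precisely what condition~(III) is engineered to control. There
\[
  \abs{\varphi(2\pi\alpha)} \le \exp\parens*{-\tfrac{1}{2}\sum_{k} b_k \log\parens*{1 + \tfrac{4\lambda^k}{(1-\lambda^k)^2}\sin^2(\pi k\alpha)}},
\]
and a lower bound on $\sum_k b_k e^{-k\delta}\sin^2(\pi k\alpha)$ of order $\abs{\log\delta}$ forces $\abs{\varphi}$ below a large negative power of $\delta$. The constant $\cM(1 + \rho_r/2 + \varepsilon)$ is chosen exactly so that this power beats the $\delta^{(\rho_r+2)/2}$ scale of the major-arc contribution, making the minor arc negligible. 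Verifying condition~(III) for a concrete class — showing the trigonometric sum really grows like $\abs{\log\delta}$ uniformly over $\sqrt{\delta}\le\abs{\alpha}\le 1/2$ — is the delicate analytic step.

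Finally I would make the variance asymptotics explicit. Since each multiplicity is geometric, $\Var(U_n) = \sum_{k=1}^n b_k k^2 \lambda^k(1-\lambda^k)^{-2}$; writing $\lambda = e^{-\delta}$ and using $e^{-x}(1-e^{-x})^{-2} = \sum_{m\ge1}m e^{-mx}$, the Mellin transform of $\Var(U_n)$ in the variable $\delta$ is $\Gamma(s)\zeta(s-1)D(s-2)$. Shifting the contour left and collecting the rightmost pole, which sits at $s = \rho_r + 2$ with residue $A_r\Gamma(\rho_r+2)\zeta(\rho_r+1)$, gives $\Var(U_n)\sim K_2\,\delta^{-(\rho_r+2)}$. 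The analogous transform $\Gamma(s)\zeta(s)D(s-1)$ for the mean yields $\E_{\lambda}[U_n]\sim \tfrac{K_2}{\rho_r+1}\delta^{-(\rho_r+1)}$, so the tuning $\E_{\lambda_n}[U_n]=n$ forces $\delta^{-(\rho_r+1)}\sim (\rho_r+1)n/K_2$. Substituting this value of $\delta$ into $\frac{1}{\sqrt{2\pi\Var(U_n)}}\sim\frac{1}{\sqrt{2\pi K_2}}\delta^{(\rho_r+2)/2}$ produces the stated asymptotic in $n$.
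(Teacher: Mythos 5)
The paper does not actually prove this statement: Theorem~\ref{thm:local-limit} is imported verbatim from~\cite{GS12} (building on~\cite{GSE08}), and the authors only verify its hypotheses (I)--(III) for their classes in \Cref{lem:main-lemma}. So there is no internal proof to compare against; what you have written is a reconstruction of the external argument. As such it is architecturally faithful and gets the quantitative content right. The factorization of the characteristic function of $U_n$ over independent geometrics, the identity $\abs{1-\lambda^k e^{2\pi i k\alpha}}^2=(1-\lambda^k)^2+4\lambda^k\sin^2(\pi k\alpha)$ linking the minor arc to condition~(III), and the Mellin computations are all correct: the transform $\Gamma(s)\zeta(s-1)D(s-2)$ does have its rightmost pole at $s=\rho_r+2$ with residue $A_r\Gamma(\rho_r+2)\zeta(\rho_r+1)=K_2$, giving $\Var(U_n)\sim K_2\delta^{-(\rho_r+2)}$, and the mean computation with $\Gamma(\rho_r+2)=(\rho_r+1)\Gamma(\rho_r+1)$ correctly yields the tuning relation $\delta^{-(\rho_r+1)}\sim(\rho_r+1)n/K_2$ and hence the stated exponent of $n$. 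Your reading of the constant $\cM(1+\rho_r/2+\varepsilon)$ --- that via $\log(1+x)\ge(\log 5/4)x$ on $[0,4]$ it forces $\abs{\varphi}\le\delta^{1+\rho_r/2+\varepsilon}$ on the minor arc, beating the major-arc scale $\delta^{(\rho_r+2)/2}$ --- is exactly the mechanism in~\cite{GSE08}.

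The one genuine gap is your treatment of the region $\abs{\alpha}\le\sqrt{\delta}$. The second-order Taylor expansion of $\log\varphi$ is \emph{not} uniformly valid there: the cubic error is of order $\kappa_3\abs{\theta}^3$ with $\kappa_3\asymp\delta^{-(\rho_r+3)}$, which at $\abs{\theta}\asymp\sqrt{\delta}$ is $\delta^{-\rho_r-3/2}\to\infty$. The expansion only controls the integrand out to $\abs{\theta}\ll\delta^{(\rho_r+3)/3}$, which comfortably covers the Gaussian scale $\delta^{(\rho_r+2)/2}$ but leaves an intermediate range $\delta^{(\rho_r+2)/2-\epsilon}\lesssim\abs{\alpha}\le\sqrt{\delta}$ on which condition~(III) is silent and your Gaussian approximation fails. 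In~\cite{GS12,GSE08} this range is handled by a separate estimate (a direct lower bound on $\sum_k b_k e^{-k\delta}\sin^2(\pi k\alpha)$ using $\sin^2(\pi k\alpha)\gtrsim k^2\alpha^2$ for $k\le 1/(2\abs{\alpha})$, which shows $\abs{\varphi}$ decays super-polynomially in $\delta$ there). Your sketch would need this third regime spelled out before it constitutes a proof; as written, the two-arc split at $\abs{\alpha}=\sqrt{\delta}$ does not close.
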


\section{Sampling Bose--Einstein Condensates}
\label{sec:new-main-result}

We can now present our main algorithm for uniformly sampling Bose--Einstein
condensates.  We bound the rejection rate of its underlying Boltzmann sampler
using singularity analysis of a Dirichlet generating function
and then analyze the complexity of the overall algorithm to prove
Theorem~\ref{thm:new-main-theorem}.



\begin{algorithm}[H]
  \caption{Uniform sampling algorithm for BECs.} 
    \label{alg:bec-algorithm}
    \begin{algorithmic}[1]
      \Procedure{RandomBEC}{$n$}
        \State $\lambda_n \gets \text{Solve} \sum_{k=1}^n k \binom{k+2}{2} \lambda^k/(1-\lambda^k) = n$
        \Repeat 
          \State $\gamma \gets \Gamma \MSET[\MSET_{1..n}(3\cZ)](\lambda_n)$
        \Until{$|\gamma| = n$}
        \State \textbf{return} $\gamma$
      \EndProcedure
    \end{algorithmic}
\end{algorithm}

Before discussing the Boltzmann samplers in \Cref{alg:bec-algorithm},
we define several probability distributions that are fundamental to the
following subroutines.

\begin{definition}
Let $\GEO(\lambda)$ denote the \emph{geometric distribution} with success
probability~$\lambda$ and probability density function
\[
  \P_\lambda(k)=(1-\lambda)^k \lambda,
\]
for all $k \in \Z_{\ge 0}$.
\end{definition}
\begin{definition}
Let $\POIS(\lambda)$ denote the \emph{Poisson distribution} with
rate parameter $\lambda$ and probability density function
\[
  \P_\lambda(k) = \frac{\lambda^k}{e^\lambda k!},
\]
for all $k \in \Z_{\ge 0}$.
The zero-truncated Poisson distribution
$\POIS_{\ge 1}(\lambda)$ with rate parameter $\lambda$
has the probability density function
\[
  \P_\lambda(k) = \frac{\lambda^k}{(e^\lambda - 1)k!}.
\]
\end{definition}
\begin{definition}
Let $\NB(r,\lambda)$ denote the \emph{negative binomial distribution} with
$r$ failures, success probability $\lambda$, and probability density function
\[
  \P_\lambda(k) = \binom{k+r-1}{r-1} \lambda^k (1-\lambda)^r,
\]
for all $k \in \Z_{\ge 0}$.
The zero-truncated negative binomial distribution is
$\NB_{\ge 1}(r, \lambda)$.
\end{definition}

One of the two main subroutines in \Cref{alg:bec-algorithm} is a
template Boltzmann sampler for the class $\MSET(\cA)$, where
$\cA$ is any combinatorial class with $a_0 = 0$.
This algorithm repeatedly calls the Boltzmann sampler of the input class $\Gamma A(\lambda^k)$
(for various values of $k$) and is part of the Boltzmann sampling
framework for combinatorial classes that can be constructed with the symbolic method~\cite{FFP07}.

\begin{algorithm}[H]
  \caption{Boltzmann sampler for $\MSET(\cA)$.}
    \label{alg:multiset-sampler}
    \begin{algorithmic}[1]
      \Procedure{$\Gamma \MSET[\cA]$}{$\lambda$}
        \State $\gamma \gets \text{Empty associative array}$
        \State $k_0 \gets \textsc{MaxIndex}(A, \lambda)$
        \For{$k=1$ \textbf{to} $k_0$}
          \If{$k < k_0$}
            \State $m \gets \POIS(A(\lambda^k)/k)$
          \Else
            \State $m \gets \POIS_{\ge 1}(A(\lambda^k)/k)$
          \EndIf
          \For{$j=1$ \textbf{to} $m$}
            \State $\alpha \gets \Gamma A(\lambda^k)$ 
            \State $\gamma[\alpha] \gets \gamma[\alpha] + k$
          \EndFor
        \EndFor
        \State \textbf{return} $\gamma$
      \EndProcedure
    \end{algorithmic}
\end{algorithm}

In particular,
\Cref{alg:multiset-sampler} is a manifestation of the generating function
for weighted partitions.
It captures the exp-log transform in~\Cref{eqn:ac-multiset-gf-def}
by using the property that a geometric random variable can be
decomposed into an infinite sum of independent, scaled Poisson random variables.
We direct the reader to the proof of \cite[Proposition 2.1]{FFP07} for more details.
The function $\textsc{MaxIndex}(A,\lambda)$ samples from the 
distribution with cumulative density function
\begin{align}
\label{eqn:inversion-method}
  \P_\lambda(U \le k) &= \frac{\prod_{j = 1}^{k} \exp\parens*{\frac{1}{j} A\parens*{\lambda^j}}}{
    \prod_{j = 1}^{\infty} \exp\parens*{\frac{1}{j} A\parens*{\lambda^j}}}\\
  &= \prod_{j = k+1}^\infty
    \exp\parens*{-\frac{1}{j} A\parens*{\lambda^j}}, \nonumber
\end{align}
for all integers $k \ge 1$. 
Note that we give the corrected expression for \Cref{eqn:inversion-method} that
originally appeared in~\cite{BFP10}.

\begin{proposition}[{\cite{FFP07}}]
\label{prop:mset-sampler-correctness}
\Cref{alg:multiset-sampler}
is a valid Boltzmann sampler for $\MSET(\cA)$.
Moreover, if the time and space complexities of $\Gamma A(\lambda)$ are, in the worst
case, linear in the size of the object produced, the time and space
complexities of \Cref{alg:multiset-sampler} are also linear in the size of the
object produced.
\end{proposition}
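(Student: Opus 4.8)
The plan is to follow the two-step structure behind \cite[Proposition~2.1]{FFP07}: first establish that \Cref{alg:multiset-sampler} realizes the Boltzmann distribution on $\MSET(\cA)$, and then account separately for its running time and space. For correctness I would begin with the \emph{idealized} version of the sampler in which the outer loop ranges over all $k \ge 1$ and every level uses an ordinary $\POIS(A(\lambda^k)/k)$ variable. At level $k$, each of the $m$ calls to $\Gamma A(\lambda^k)$ returns a fixed element $\alpha \in \cA$ with probability $\lambda^{k|\alpha|}/A(\lambda^k)$, so by the thinning (coloring) property of Poisson random variables the number of times a given $\alpha$ is produced at level $k$ is an independent $\POIS(\lambda^{k|\alpha|}/k)$ variable. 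Writing $\mu_\alpha$ for the multiplicity of $\alpha$ in the output $\gamma$, which receives a contribution of $k$ for each such selection, and setting $x = \lambda^{|\alpha|}$, the exp-log identity $-\log(1-x) = \sum_{k \ge 1} x^k/k$ yields the probability generating function
\[
  \E\bigl[u^{\mu_\alpha}\bigr]
  = \prod_{k=1}^\infty \exp\!\left(\frac{x^k}{k}\bigl(u^k-1\bigr)\right)
  = \frac{1-x}{1-xu}.
\]
Hence each $\mu_\alpha$ is geometric with success probability $1-\lambda^{|\alpha|}$, and the multiplicities are mutually independent. Multiplying these marginals, the probability of producing a given multiset $\gamma$ is $\prod_{\alpha}(1-\lambda^{|\alpha|})\lambda^{\mu_\alpha |\alpha|} = \lambda^{|\gamma|}/C(\lambda)$ by \Cref{eqn:ac-multiset-gf-def}, which is exactly the Boltzmann law.

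Next I would justify the truncation that makes the sampler terminate. In the idealized process the largest level at which any element is drawn has the cumulative density function \Cref{eqn:inversion-method}: the event $\{U \le k\}$ is precisely the event that no element is selected at any level $j > k$, which has probability $\prod_{j > k}\exp(-A(\lambda^j)/j)$ because the level counts are independent Poisson. The routine \textsc{MaxIndex} returns $k_0$ by the inversion method from this law, and conditioning the idealized process on its maximal level equalling $k_0$ turns the level-$k_0$ count into a zero-truncated Poisson (at least one element must appear there) while forcing all higher levels to contribute nothing. The finite algorithm therefore reproduces the idealized joint distribution exactly, and in particular still samples from the Boltzmann distribution.

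For the complexity, the key accounting observation is that each call to $\Gamma A(\lambda^k)$ returns an element $\alpha$ contributing $k|\alpha| \ge |\alpha|$ to $|\gamma|$ (every element has size $\ge 1$ since $a_0 = 0$), while costing $O(|\alpha|)$ time and space by hypothesis; summing over all calls bounds the total element-generation cost by $O(|\gamma|)$. To bound the per-level overhead, note that because level $k_0$ uses a zero-truncated Poisson, at least one element of size $\ge 1$ is drawn there, so $|\gamma| \ge k_0$. Thus the outer loop executes $O(|\gamma|)$ iterations, and the inversion step that computes $k_0$ can be carried out incrementally using the generating-function oracle in $O(k_0) = O(|\gamma|)$ time. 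Combining these bounds gives the claimed linear time and space.

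The main obstacle I anticipate is the truncation argument: one must verify that \textsc{MaxIndex} together with the zero-truncated Poisson at the last level reproduces the idealized infinite-level distribution \emph{exactly} rather than merely approximating it, and this is precisely where the corrected form of \Cref{eqn:inversion-method} noted in the text is essential. By contrast, the generating-function computation in the first step, though it is the conceptual heart of why the exp-log transform underlies the sampler, reduces to a routine identity once the Poisson thinning is in place.
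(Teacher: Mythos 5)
Your proposal is correct and follows essentially the same route the paper indicates: the paper does not prove this proposition itself but cites \cite[Proposition~2.1]{FFP07} and sketches exactly the idea you develop, namely that the sampler realizes the exp-log transform in \Cref{eqn:ac-multiset-gf-def} by decomposing each geometric multiplicity into an infinite sum of independent, scaled Poisson variables, with \textsc{MaxIndex} and the zero-truncated Poisson at level $k_0$ implementing the exact truncation via \Cref{eqn:inversion-method}. Your Poisson-thinning computation of the probability generating function $(1-x)/(1-xu)$, the conditioning argument for the maximal level, and the charging of each $O(|\alpha|)$ call against the $k|\alpha|$ it contributes to $|\gamma|$ are all sound and fill in the details the paper defers to the reference.
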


The second subroutine in \Cref{alg:bec-algorithm} is a Boltzmann sampler
for the nonempty multisets of~$d$ different colored atoms with size at
most $n$.
This parameterizes the truncated class of Bose--Einstein condensates when $d=3$,
and it is necessary for the Khintchine--Meinardus probabilistic method.
It suffices to use the Boltzmann sampler $\Gamma \MSET_{\ge 1}(d\cZ)$
and rejection sampling by \Cref{lem:middle-rejection}.

\begin{algorithm}[H]
  \caption{Boltzmann sampler for multisets.}
    \label{alg:bec-subroutine}
    \begin{algorithmic}[1]
      \Procedure{$\Gamma \MSET_{1..n}(d\cZ)$}{$\lambda$}
        \Repeat  
          \State $m \gets \NB_{\ge 1}(d, \lambda)$
        \Until{$m \le n$}
        \State \textbf{return} $\textsc{RandomMultiset}(m, d)$
      \EndProcedure
    \end{algorithmic}
\end{algorithm}

\noindent
The crucial observation in \Cref{alg:bec-subroutine} is that
a negative binomial experiment with
$d$ failures can be interpreted as a multiset of $d$ different colored atoms using 
the classical combinatorial idea of stars and bars.
For a given experiment,
a successful trial adds an atom of the current color and
a failure is a bar that separates atoms of different colors.
Once $m$ is determined,
the function $\textsc{RandomMultiset}(m,d)$
returns one of the $\binom{m+d-1}{d-1}$ multisets of size $m$ uniformly at random.

\subsection{Tuning the Boltzmann Sampler}
\label{sec:tuning}



The factorization of the generating function for weighted partitions has a
useful probabilistic interpretation in the context of Boltzmann sampling.
We can iterate over all types of summands of size at most $n$ and use
independent geometric random variables to determine how many parts each
type contributes to the final object~\cite{AT94, FFP07}.
It follows that the random variable for the size of the object
drawn from the Boltzmann sampler is
\[
  U_{n} = \sum_{k=1}^n \sum_{j=1}^{b_k} kY_{k,j},
\]
where
$Y_{k,j} \sim \GEO\parens{1-\lambda^k}$.

\begin{lemma}\label{lem:expectation2}
We have
\[
  \E_{\lambda}\bracks*{U_n} = \sum_{k=1}^n k b_k \parens*{\frac{\lambda^k}{1 - \lambda^k}}.
\]
\end{lemma}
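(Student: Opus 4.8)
The plan is to apply linearity of expectation to the given decomposition $U_n = \sum_{k=1}^n \sum_{j=1}^{b_k} k Y_{k,j}$ and reduce the entire computation to evaluating the mean of a single geometric summand. Since the only random quantities are the $Y_{k,j}$ and all other factors are deterministic, linearity immediately gives
\[
  \E_{\lambda}[U_n] = \sum_{k=1}^n \sum_{j=1}^{b_k} k\, \E_\lambda[Y_{k,j}],
\]
so the whole lemma hinges on computing $\E_\lambda[Y_{k,j}]$ for $Y_{k,j} \sim \GEO(1-\lambda^k)$.

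First I would compute the mean of a geometric random variable under the paper's convention. By the definition given, $Y \sim \GEO(p)$ has density $\P_p(i) = (1-p)^i p$ for $i \in \Z_{\ge 0}$, so its expectation is $\E[Y] = \sum_{i=0}^\infty i (1-p)^i p = (1-p)/p$, using the standard identity $\sum_{i \ge 0} i x^i = x/(1-x)^2$ for $|x| < 1$. Substituting the success probability $p = 1-\lambda^k$ (valid since $\lambda \in (0,\rho_C) \subseteq (0,1)$ guarantees $0 < \lambda^k < 1$) yields
\[
  \E_\lambda[Y_{k,j}] = \frac{1-(1-\lambda^k)}{1-\lambda^k} = \frac{\lambda^k}{1-\lambda^k}.
\]

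Finally I would substitute this back and collapse the inner sum. Because $\E_\lambda[Y_{k,j}]$ does not depend on the color index $j$, the sum over $j = 1, \dots, b_k$ simply contributes a factor of $b_k$, giving
\[
  \E_{\lambda}[U_n] = \sum_{k=1}^n \sum_{j=1}^{b_k} k \cdot \frac{\lambda^k}{1-\lambda^k}
  = \sum_{k=1}^n k\, b_k \parens*{\frac{\lambda^k}{1-\lambda^k}},
\]
as claimed. Honestly, I do not expect a genuine obstacle here: the argument is routine once the summand expectation is in hand. The only point requiring a moment of care is matching the parameterization convention for $\GEO$—confirming that "success probability $1-\lambda^k$" in the paper's density convention produces mean $\lambda^k/(1-\lambda^k)$ rather than its reciprocal—and checking that $\lambda \in (0,1)$ so the geometric series converges.
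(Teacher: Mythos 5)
Your proof is correct and follows exactly the paper's approach: the paper's proof is a one-line appeal to linearity of expectation and the mean of the geometric variables $Y_{k,j}$, which you have simply written out in full, including the correct verification that the paper's convention $\P_p(i)=(1-p)^i p$ gives mean $(1-p)/p = \lambda^k/(1-\lambda^k)$ when $p = 1-\lambda^k$.
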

\begin{proof}
The result follows from the linearity of expectation
and the mean of the variables $Y_{k,j}$.
\end{proof}

Since $\E_{\lambda}[U_n]$ is strictly increasing
\Cref{eqn:strictly-increasing}, we can use \Cref{lem:expectation2} and a
root-finding algorithm such as the bisection method to compute an
$\varepsilon$-approximation of $\lambda_n$ in time $O(n\log (\varepsilon^{-1}))$.
In this paper, however, we assume an oracle that returns the exact
value of~$\lambda_n$ in constant time and defer the numerical analysis needed
for the accuracy of the $\varepsilon$-approximation. 
Since we are dealing with analytic functions, a quantitative statement
about the continuity of the local limit theorem near~$\lambda_n$ should
be achievable.

\subsection{Rejection Rate of \Cref{alg:bec-algorithm}}
\label{sec:rejection-rates-1}
We generalize the analysis in this subsection from Bose--Einstein condensates
to the class of weighted partitions parameterized by a positive integer
sequence of degree $r$.
We show that Theorem~\ref{thm:local-limit} holds for all such weighted partitions
and bound the residue of the rightmost pole of the series $D(s)$
to give rejection rates for these Boltzmann samplers as a function of the degree~$r$.

\begin{definition}
Let a \emph{positive integer sequence of degree $r$} be a sequence of positive
integers $(b_k)_{k=1}^\infty$ such that $b_k = p(k)$ for some polynomial
\[
  p(x) = a_0 + a_1 x + \dots + a_r x^r \in \R[x],
\]
with $\deg(p) = r$.
\end{definition}

To interface with the Khintchine--Meinardus probabilistic method, 
we apply results from analytic number theory about the Riemann zeta function.
The Riemann zeta function $\zeta(s)$, $s = \sigma + it$,
is defined as the analytic continuation of the convergent series
\[
  \zeta(s) = \sum_{n=1}^\infty \frac{1}{n^s},
\]
for all $\sigma > 1$,
into the entire complex plane. The only singularity of $\zeta(s)$ is a simple pole at $s=1$.

The following lemma shows that the Dirichlet generating function $D(s)$
for $(b_k)_{k=1}^\infty$
is a linear combination of Riemann zeta functions that are translated and
scaled by the coefficients of $p(x)$.
With this, we can easily compute the residues of the poles of $D(s)$
and satisfy conditions (I)--(III).

\begin{restatable}{lemma}{mainlemma}
\label{lem:main-lemma}
If $(b_k)_{k = 1}^{\infty}$ is a positive integer sequence of degree $r$,
its Dirichlet generating series is
\begin{equation*}
  D(s) = \sum_{k=0}^{r} a_k \zeta(s - k),
\end{equation*}
which satisfies conditions (I)--(III).
Moreover, $D(s)$ has at most $r+1$ simple poles
on the positive real axis at positions
$\rho_k = k+1$ with residue $A_k = a_k$ if and only if $a_k \ne 0$,
for $k \in \{0,1,\dots,r\}$.
\end{restatable}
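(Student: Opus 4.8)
The plan is to compute $D(s)$ directly from its definition, read off the pole structure and residues from the known analytic continuation of $\zeta$, and then verify conditions (I)--(III) in turn. For $\sigma > r+1$ every series below converges absolutely (since $\sum_k k^{r-\sigma}$ converges when $\sigma - r > 1$), so I would substitute $b_k = p(k) = \sum_{j=0}^{r} a_j k^j$ into the definition and interchange the two summations:
\[
  D(s) = \sum_{k=1}^\infty b_k k^{-s} = \sum_{j=0}^{r} a_j \sum_{k=1}^\infty k^{-(s-j)} = \sum_{j=0}^{r} a_j \zeta(s-j).
\]
Because $\zeta$ extends to a meromorphic function on $\C$ whose only singularity is a simple pole at $s=1$ with residue $1$, the right-hand side supplies a meromorphic continuation of $D(s)$ to all of $\C$. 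The term $a_j \zeta(s-j)$ contributes a simple pole at $s = j+1$ with residue $a_j$ exactly when $a_j \ne 0$, and since the candidate locations $1, 2, \dots, r+1$ are distinct there is no cancellation. This yields the stated formula, the positions $\rho_k = k+1$, the residues $A_k = a_k$, the ``if and only if $a_k \ne 0$'' claim, and the bound of at most $r+1$ simple poles.

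For condition (I), the meromorphic continuation to $\cH = \set{s : \sigma \ge -C_0}$ holds for any $C_0 \in (0,1]$ (indeed to all of $\C$), and the rightmost pole sits at $s = r+1$. Since $b_k = p(k) > 0$ for large $k$ forces the leading coefficient $a_r > 0$, the rightmost residue $A_r = a_r$ — the only one entering the constant $K_2$ in \Cref{thm:local-limit} — is positive. I would flag here a genuine subtlety: a polynomial positive on $\Z_{>0}$ may have negative intermediate coefficients, so positivity of the non-leading residues must be assumed or checked case by case. For the sequences of primary interest it holds outright; e.g. for BECs $p(x) = (x^2 + 3x + 2)/2$ has all coefficients positive.

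For condition (II), I would invoke standard growth bounds for $\zeta$ in vertical strips: for $\sigma$ bounded below, $\zeta(\sigma + it) = O(|t|^{C})$ as $|t| \to \infty$, which for $\sigma < 0$ follows from the functional equation together with Stirling's estimate for the gamma factor (the exponential factors cancel and leave polynomial growth). Applying this to each shifted term $\zeta(s-j)$ with $s \in \cH$ and taking the maximum exponent gives a uniform bound $D(s) = O(|t|^{C_1})$. For condition (III), the point is that the dominant contribution, obtained from $\sin^2(\pi k \alpha) = \tfrac12\parens*{1 - \cos(2\pi k \alpha)}$, grows polynomially in $1/\delta$: a Mellin--Tauberian estimate gives $\sum_k b_k e^{-k\delta} \sim a_r \Gamma(r+1) \delta^{-(r+1)}$. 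Even in the worst regime $|\alpha| \approx \sqrt{\delta}$, using $\sin^2(\pi k \alpha) \approx (\pi k \alpha)^2$ for $k \lesssim 1/\sqrt{\delta}$ yields a lower bound of order $\delta^{-(r+1)/2}$, which dominates the $O(\abs{\log \delta})$ right-hand side as $\delta \to 0$, so (III) holds comfortably.

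The main obstacle is the uniform polynomial bound in condition (II): combining the convexity and functional-equation estimates for $\zeta$ so that both the exponent $C_1$ and the implied constant are uniform across the entire strip $\sigma \ge -C_0$ and across all $r+1$ shifts requires the most care. By contrast, (I) (modulo the residue-positivity caveat) and (III) become routine once the decomposition $D(s) = \sum_{k=0}^{r} a_k \zeta(s-k)$ is established.
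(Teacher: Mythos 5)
Your derivation of the decomposition $D(s) = \sum_{j=0}^{r} a_j \zeta(s-j)$, the pole locations and residues, and conditions (I) and (II) matches the paper's proof: the paper also reads off the poles from the continuation of $\zeta$ and verifies (II) by citing the standard growth bounds for $\zeta$ in vertical strips (taking $C_1 = r+2$), so the uniformity you worry about is settled by citation rather than by a new argument. Your caveat about residue positivity is well taken and in fact sharper than the paper: condition (I) as stated demands \emph{positive} residues, yet a polynomial that is positive and integer-valued on $\Z_{>0}$ can have a negative intermediate coefficient (e.g.\ $p(x) = x^2 - x + 1$), so the blanket claim that (I) holds needs either an added hypothesis $a_j \ge 0$ or the observation that only the rightmost residue $A_r = a_r > 0$ enters the local limit theorem. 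The paper's proof passes over this point in silence.

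The genuine gap in your proposal is condition (III). You treat only the regime $|\alpha| \approx \sqrt{\delta}$, where the small-angle approximation $\sin^2(\pi k \alpha) \approx (\pi k \alpha)^2$ is valid for $k \lesssim 1/|\alpha|$, and you assert without justification that this is the worst case. But the inequality must hold uniformly for all $\sqrt{\delta} \le |\alpha| \le 1/2$, and for $|\alpha|$ of constant order the small-angle approximation fails outright; there the lower bound has to come from the fact that $\sin^2(\pi k \alpha)$ is bounded below \emph{on average} over $k$ --- for instance via $\sin^2 x = (1-\cos 2x)/2$ and a geometric-series estimate, or, as the paper does, via the Karatsuba--Voronin inequality $2\sum_{k=1}^{m} \sin^2(\pi k \alpha) \ge m\parens*{1 - \min\parens*{1, 1/(2m|\alpha|)}}$, applied after discarding $b_k \ge 1$ and truncating the sum at $m = \lceil 1/(2|\alpha|) + 1/\delta \rceil$ so that $e^{-k\delta} \ge e^{-m\delta}$ remains bounded below. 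That single inequality is the ingredient your sketch is missing; with it one gets a bound of order $\delta^{-1} \gg |\log\delta|$ uniformly in $\alpha$, and the Mellin--Tauberian estimate for the untwisted sum becomes unnecessary.
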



\begin{proof}
The Riemann zeta function converges uniformly and
is analytic on $\C\setminus\{1\}$, so
\begin{align*}
  D(s) &= \sum_{k=1}^\infty \frac{b_k}{k^s}\\
       &= \sum_{k=1}^\infty \frac{a_0 + a_1k + \dots + a_r k^r}{k^s}\\
       &= \sum_{k=0}^r a_k \zeta(s-k).
\end{align*}
As $\zeta(s)$ has a simple pole at $s=1$ with residue~$1$, the claim about the
poles of $D(s)$ follows, and condition (I) is satisfied by letting $C_0 = 1$.
For condition (II), we let $C_1 = 2 + r$ since $D(s)$ is a linear
combination of shifted zeta functions and
use \cite[Section 8.2]{BD04}:
\begin{equation*}\label{eqn:zeta-bounds}
  \zeta(s) = \begin{cases}
    O\parens*{t^{1/2-\sigma}} & \text{if } \sigma < 0,\\
    O(t) & \text{if } 0 \le \sigma \le 1,\\
    O(1) & \text{if } 1 < \sigma.
  \end{cases}
\end{equation*}

To show condition (III),
we use an approach similar to the proof of
\cite[Lemma 1]{GSE08}, which employs the following inequality of
Karatsuba and Voronin~\cite[Section 4.2, Lemma 1]{KV92}
for trigonometric sums related to the Riemann zeta function:
For all positive integers~$m$, 
\begin{equation}\label{eqn:sin-bound}
  2\sum_{k=1}^m \sin^2(\pi k \alpha)
  \ge m\parens*{1 - \min\parens*{1, \frac{1}{2m|\alpha|}}}.
\end{equation}
By assumption, $(b_k)_{k=1}^\infty$ is a sequence of positive integers and
$0 < \sqrt{\delta} \le |\alpha| \le 1/2$, so 
\begin{align*}
  2\sum_{k=1}^\infty b_k e^{-k\delta} \sin^2(\pi k \alpha)
  &\ge 2\sum_{k=1}^m e^{-k\delta} \sin^2(\pi k \alpha)\\
  &\hspace{-0.1cm}\ge e^{-m\delta} m\parens*{1 - \min\parens*{1, \frac{1}{2m|\alpha|}}}.
\end{align*}
Using \Cref{eqn:sin-bound}, we let
$m = \ceil{1/(2|\alpha|) + 1/\delta} \ge 1$
so that
\begin{align*}
  2\sum_{k=1}^\infty b_k e^{-k\delta} \sin^2(\pi k \alpha)
  &\ge e^{-m\delta} \parens*{m - \frac{1}{2|\alpha|}}\\
  &\ge e^{-\parens*{\frac{1}{2|\alpha|} + \frac{1}{\delta} + 1}\delta} \delta^{-1}\\
  &\ge e^{-\parens*{\frac{\sqrt{\delta}}{2} + 1 + \delta}} \delta^{-1}.
\end{align*}
Recall that the position of the pole $\rho_r = r+1$ is fixed and $\cM$ is
constant. It follows that
\begin{equation*}
  e^{-\parens*{\frac{\sqrt{\delta}}{2} + 1 + \delta}} \frac{1}{\delta \abs{\log \delta}}
  \ge \cM \parens*{1 + \frac{\rho_r}{2} + \varepsilon},
\end{equation*}
for $\delta$ sufficiently small and $\varepsilon = 1$.
\end{proof}

To conveniently use Theorem~\ref{thm:local-limit} in the analysis of
Algorithm~\ref{alg:bec-algorithm},
we lower bound the rightmost residue $A_r$ using \Cref{lem:main-lemma}
and the method of finite differences.
In particular, we bound the coefficients of $p(x)$ in the
binomial basis as illustrated in~\cite{BFT16}.
Define the \emph{forward difference} operator~$\Delta$ to be
\[
  \Delta p(x) = p(x+1) - p(x).
\]
Higher order differences are given by
\[
  \Delta^n p(x) \DEF \Delta^{n-1} p(x + 1) - \Delta^{n-1} p(x).
\]
Viewing $p(x)$ as a Newton interpolating polynomial, we can write
\[
  p(x) = \sum_{j=0}^r \Delta^j p(0) \binom{x}{j}.
\]

\begin{lemma}[{\cite{Sta12}}]
\label{lem:stanley}
Let $p(x) \in \R[x]$ be a polynomial of degree $r$.
We have $p(n) \in \Z$, for all $n \in \Z$, if and only if
\[
  \Delta^j p(0) \in \Z,
\]
for all $0 \le j \le r$.
\end{lemma}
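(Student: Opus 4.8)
The plan is to derive both implications from two explicit formulas that are essentially inverse to one another: the finite-difference expansion
\[
  \Delta^j p(0) = \sum_{i=0}^{j} (-1)^{j-i} \binom{j}{i} p(i),
\]
and the Newton interpolation formula $p(x) = \sum_{j=0}^{r} \Delta^j p(0)\binom{x}{j}$ already recorded above. The first expresses the differences as integer combinations of point values, and the second expresses point values as integer combinations of the differences, so the equivalence falls out once both are in hand.

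For the forward direction, I would first establish the finite-difference expansion, either by a short induction on $j$ using $\Delta^j p(0) = \Delta^{j-1}p(1) - \Delta^{j-1}p(0)$ together with Pascal's rule, or by iterating the shift operator $p(x)\mapsto p(x+1)$. With this formula available the claim is immediate: for each $0 \le j \le r$, every $\binom{j}{i}$ is an integer and every $p(i)$ with $0 \le i \le j$ is an integer by hypothesis, so $\Delta^j p(0)$ is an integer combination of integers, hence lies in $\Z$. For the reverse direction I would invoke the Newton formula directly. Because $\deg p = r$, the sum terminates at $j = r$, and by assumption each coefficient $\Delta^j p(0)$ is an integer; it then suffices to know that $\binom{n}{j}\in\Z$ for every $n \in \Z$ and every $0 \le j \le r$, since in that case $p(n)$ is a finite integer combination of integers.

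The only genuinely nontrivial point, and the main obstacle such as it is, is verifying the integrality of $\binom{n}{j}$ for \emph{all} integers $n$, including the negative ones, since $\binom{x}{j} = x(x-1)\cdots(x-j+1)/j!$ is a priori only a rational-valued polynomial. For $n \ge 0$ this is the standard subset-counting fact. For $n < 0$, writing $n = -m$ with $m > 0$, I would use the upper-negation identity $\binom{-m}{j} = (-1)^j \binom{m+j-1}{j}$ to reduce to the nonnegative case, where $\binom{m+j-1}{j}$ is an ordinary binomial coefficient. Once this integrality is settled, both directions close and the stated equivalence follows.
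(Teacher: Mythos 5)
The paper does not prove this lemma; it is stated as a citation to Stanley, and the standard proof there is exactly the one you give (binomial inversion between $\Delta^j p(0) = \sum_{i=0}^{j}(-1)^{j-i}\binom{j}{i}p(i)$ and the Newton expansion $p(x)=\sum_{j=0}^{r}\Delta^j p(0)\binom{x}{j}$, plus integrality of $\binom{n}{j}$ for negative $n$ via upper negation). Your argument is correct and complete; the only step you lean on without proof is the Newton formula itself, but the paper records it immediately before the lemma and it follows from your first display by binomial inversion (or by noting both sides are polynomials of degree at most $r$ agreeing at $x=0,1,\dots,r$), so nothing is missing.
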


\begin{restatable}{lemma}{rejectionrate}
\label{lem:rejection-rate}
Let $(b_k)_{k=1}^\infty$ be any positive integer sequence of degree $r$.
For $n$ sufficiently large,
\[
  \P_{\lambda_n}\parens*{U_n = n} \ge \frac{1}{2\sqrt{2 \pi}}
  \parens*{(r+2)n}^{-\frac{r+3}{2(r+2)}}.
\]
\end{restatable}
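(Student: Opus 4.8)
The plan is to reduce the lemma to the local limit theorem of \Cref{thm:local-limit} and then pin down the unspecified constant $K_2$ well enough to turn the asymptotic equivalence into the stated explicit lower bound. First I would invoke \Cref{lem:main-lemma}, which guarantees that conditions (I)--(III) hold for any positive integer sequence of degree $r$ and identifies the rightmost pole as $\rho_r = r+1$ with residue $A_r = a_r$. Substituting $\rho_r = r+1$ into \Cref{thm:local-limit} immediately produces the exponent $-\frac{2+\rho_r}{2(\rho_r+1)} = -\frac{r+3}{2(r+2)}$ on $n$, which matches the target, and fixes the constant as $K_2 = a_r\,\Gamma(r+3)\,\zeta(r+2)$. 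After this substitution the lemma becomes a comparison between two explicit expressions differing only in their leading constants.

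The key step is a lower bound on $K_2$, and more precisely on the leading coefficient $a_r$. Here I would use the Newton-interpolation viewpoint together with \Cref{lem:stanley}. Writing $p(x) = \sum_{j=0}^r \Delta^j p(0)\binom{x}{j}$, the $r$-th finite difference of a degree-$r$ polynomial is the constant $\Delta^r p(0) = r!\,a_r$, and \Cref{lem:stanley} forces $\Delta^r p(0) \in \Z$ because $p(k) = b_k \in \Z$ for all $k$. Since $b_k = p(k) > 0$ for arbitrarily large $k$, the leading coefficient must satisfy $a_r > 0$ (this is also what makes the residue positive in condition (I)), so $\Delta^r p(0)$ is a positive integer, hence $\Delta^r p(0) \ge 1$, i.e. $a_r \ge 1/r!$. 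This gives
\[
  K_2 = a_r\,\Gamma(r+3)\,\zeta(r+2) \ge \frac{(r+2)!}{r!}\,\zeta(r+2) = (r+1)(r+2)\,\zeta(r+2) > 1,
\]
since $\zeta(r+2) > 1$ for every $r \ge 0$.

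To finish, I would rewrite the asymptotic right-hand side of \Cref{thm:local-limit} as $\frac{1}{\sqrt{2\pi}}\,K_2^{\beta-1/2}\,\bigl((r+2)n\bigr)^{-\beta}$, where $\beta = \frac{r+3}{2(r+2)}$ and $\beta - \tfrac12 = \frac{1}{2(r+2)} > 0$. Because $K_2 \ge 1$ and this exponent is positive, $K_2^{\beta-1/2} \ge 1$, so the right-hand side is at least twice the claimed bound $\frac{1}{2\sqrt{2\pi}}\bigl((r+2)n\bigr)^{-\beta}$. The equivalence in \Cref{thm:local-limit} means the ratio of $\P_{\lambda_n}(U_n = n)$ to this right-hand side tends to $1$, hence exceeds $\tfrac12$ once $n$ is large enough; multiplying the resulting factor $\tfrac12$ against the factor $2$ above yields $\P_{\lambda_n}(U_n = n) \ge \frac{1}{2\sqrt{2\pi}}\bigl((r+2)n\bigr)^{-\frac{r+3}{2(r+2)}}$.

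The main obstacle I expect is the constant bookkeeping rather than any deep analytic difficulty: one must verify that the positive exponent $\beta - \tfrac12$ and the crude bound $K_2 \ge 1$ interact so that the explicit constant $\frac{1}{2\sqrt{2\pi}}$ and the $(r+2)^{-\beta}$ factor emerge cleanly, and one must check that the ``for $n$ sufficiently large'' slack absorbed by the factor $\tfrac12$ is exactly what the asymptotic equivalence supplies. The genuine content is the integrality argument showing $a_r \ge 1/r!$, which is what prevents $K_2$ from degenerating and is the only place where the hypothesis that $(b_k)$ is an \emph{integer} sequence is used.
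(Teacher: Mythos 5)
Your proposal is correct and follows essentially the same route as the paper's proof: invoke the local limit theorem via \Cref{lem:main-lemma}, use \Cref{lem:stanley} and the Newton-interpolation basis to get $a_r \ge 1/r!$ and hence $K_2 \ge 1$, exploit the positive exponent $\frac{1}{2(r+2)}$ on $K_2$ to drop it, and absorb the asymptotic slack with a factor of $\tfrac12$ (the paper's $\varepsilon = 1/2$). Your explicit justification that $a_r > 0$ from the eventual positivity of $b_k$ is a small point the paper leaves implicit, but the argument is otherwise identical.
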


\begin{proof}
Theorem \ref{thm:local-limit} holds for $U_n$ by \Cref{lem:main-lemma}.
By assumption $(b_k)_{k=1}^\infty$ is a positive integer sequence of degree~$r$,
so \Cref{lem:stanley} implies that $p(x)$ has integral coefficients
$\Delta^{j}p(0)$
in the binomial basis.
Thus, $\Delta^r p(0)$ is a positive integer, which implies that
\[
  a_r \ge \frac{1}{r!}.
\]
The residue $A_r = a_r$ by \Cref{lem:main-lemma},
so we can $K_2$ by
\begin{align*}
  K_2 &= A_r \Gamma(\rho_r + 2) \zeta(\rho_r + 1)\\
  &\ge \frac{1}{r!} (r+2)!\\
  &\ge 1,
\end{align*}
since $\rho_r = r+1$ and $\zeta(n) \ge 1$, for all $n \ge 2$.
It follows that for $\varepsilon = 1/2$ and $n$ sufficiently large,
we have
\begin{align*}
  \P_{\lambda_n}\parens*{U_n = n} &\ge (1-\varepsilon) \frac{1}{\sqrt{2\pi K_2}} \parens*{\frac{K_2}{(\rho_r+1)n}}^{\frac{2+\rho_r}{2(\rho_r+1)}}\\
  & \ge \frac{1}{2\sqrt{2\pi}} K_2^{\frac{1}{2(r+2)}}
  \parens*{(r+2)n}^{-\frac{r+3}{2(r+2)}}\\
  & \ge \frac{1}{2\sqrt{2\pi}} \parens*{(r+2)n}^{-\frac{r+3}{2(r+2)}}, 
\end{align*}
as desired. 
\end{proof}

\subsection{Proving the Main Theorems}
\label{sec:main-theorem-proof}

Recall that we follow the convention of using the real-arithmetic model
of computation and
an oracle that evaluates a
generating function within its radius of convergence
in constant time~\cite{BFP10,FFP07}.
A consequence of this is that we can iteratively sample an integer~$m$ from the distributions
$\GEO(\lambda)$, $\POIS(\lambda)$, $\textsc{MaxIndex}(A,\lambda)$, etc., in
$O(m)$ time.
We restate Theorem~\ref{thm:new-main-theorem} 
and Theorem~\ref{thm:general-theorem} for convenience.

\newmaintheorem*

\begin{proof}
We analyze the complexities of \Cref{alg:bec-algorithm}.
In the tuning step, we can compute the exact
value of $\lambda_n$ in $O(1)$ time using a root-finding oracle and \Cref{lem:expectation2}.
(We can compute an $\varepsilon$-approximation in time $O(n \log(\varepsilon^{-1}))$.)
We invoke the Boltzmann sampler $\Gamma \MSET[\MSET_{1..n}(3\cZ)](\lambda_n)$
at most $O(n^{5/8})$ times in expectation by \Cref{lem:rejection-rate},
and we implement this Boltzmann sampler
using \Cref{alg:multiset-sampler} and \Cref{alg:bec-subroutine}.
\Cref{lem:middle-rejection} ensures that the time and space complexities
of \Cref{alg:bec-subroutine} are linear in the size of the object produced.
\Cref{alg:multiset-sampler} runs in expected $O(n)$ time and space by
\Cref{prop:mset-sampler-correctness} and
our choice of $\lambda_n$.
Thus, \Cref{alg:bec-algorithm} runs in expected
$O(n^{1.625})$ time and uses expected $O(n)$ space.

To use deterministic $O(n)$ space, we modify the Boltzmann sampler to
reject partially constructed objects if their size is at least $2n$.
By Markov's inequality, this refined Boltzmann sampler outputs objects
of size less than $2n$ from a new Boltzmann distribution with
probability at least
\begin{align*}
  1 - \P_{\lambda_n}(U_n \ge 2n) &\ge 1 - \frac{\E_{\lambda_n}[U_n]}{2n}\\
  &= \frac{1}{2}.
\end{align*}
Thus, at most a constant number of trials are needed in expectation to
sample from the tail-truncated Boltzmann distribution. 
\end{proof}

The singularity analysis in Theorem~\ref{thm:new-main-theorem} generalizes to the
setting of weighted partitions parameterized by a positive integer sequence
of degree $r$, but we need to use a different Boltzmann sampler.
The truncated class~$\cC_n$ of weighted partitions has the generating function
\[
  C_n(z) = \prod_{k=1}^{n} \parens{1-z^k}^{-b_k},
\]
so we can use independent geometric random variables to sample the
number of parts each type of summand contributes to the final configuration.
See the Boltzmann samplers for the Cartesian product and
sequence operator in~\cite{FFP07} for more details.

\generaltheorem*

\begin{proof}
The tuning step is the same as in the proof of Theorem~\ref{thm:new-main-theorem}.
This Boltzmann algorithm samples from
  \[\sum_{k=1}^{n} b_k = O(n^{r+1})\]
geometric distributions, each taking time and space proportional to the
number they output. The total number of geometric trials across all $O(n^{r+1})$
distributions is $O(n)$ by our choice of $\lambda_n$.
Using Markov's inequality again, we guarantee that 
the algorithm uses deterministic $O(n)$ space.
\Cref{lem:rejection-rate} gives a rejection rate of $O(n^{3/4})$.
\end{proof}

\subsection{Rejection Rate of \Cref{alg:bec-subroutine}}
\label{app:bec-rejection-rates}

\Cref{lem:middle-rejection} shows that the rejection sampling in
\Cref{alg:bec-subroutine} takes a constant number of trials in expectation each
time it is called as a subroutine by \Cref{alg:multiset-sampler}.
We generalize our analysis from the case $d=3$ (Bose--Einstein condensates)
to $d \ge 1$ so that our arguments
involving negative binomial distributions will be useful in other contexts.
In particular, we develop a simple but effective tail inequality for negative
binomial random variables (\Cref{lem:d-root-convolution})
parameterized by high success probabilities that
outperforms the standard Chernoff-type inequality in this setting.

Throughout this subsection let $[n]=\{1,2,\dots,n\}$
and $V$ be a random variable for the size of an object drawn from $\Gamma\MSET_{\ge 1}(d\cZ)(\lambda)$.
The ordinary generating function for $\cB = \MSET(d\cZ)$ (the multisets of $d$ distinct atoms)
is
\[
  B(z) = \sum_{k=0}^\infty \binom{k+d-1}{d-1} z^k = \frac{1}{\parens*{1-z}^d},
\]
so we have $V \sim \NB_{\ge 1}(d, \lambda)$. 
Similarly, assume that $W \sim \NB(d, \lambda)$.

\begin{lemma}\label{lem:middle-rejection}
For $n$ sufficiently large and all $k \in [n]$, 
\[
  \P_{\lambda_n^k}(V \le n) \ge \frac{1}{2}.
\]
\end{lemma}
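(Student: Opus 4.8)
The plan is to reduce the claim to a single clean inequality and then control the upper tail of $V$ using the decomposition of the negative binomial distribution into a sum of geometrics. Write $W \sim \NB(d,\lambda_n^k)$ for the untruncated variable, so that $V$ is $W$ conditioned on $W \ge 1$. Since $\{W > n\} \subseteq \{W \ge 1\}$, I would rewrite
\[
  \P_{\lambda_n^k}(V \le n) = 1 - \frac{\P(W > n)}{\P(W \ge 1)},
\]
which turns the lemma into the task of showing that this ratio is at most $1/2$ once $n$ is large, uniformly in $k \in [n]$. Intuitively $V$ is stochastically increasing in its success probability and $\lambda_n^k \le \lambda_n$ for $k \ge 1$, so the case $k=1$ is the hardest; but I will instead carry the exponent $k$ through the estimates and observe that the final bound only improves as $k$ grows, so no separate monotonicity argument is needed.

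For the denominator, the elementary inequality $1-(1-x)^d \ge x$ on $[0,1]$ gives $\P(W \ge 1) = 1 - (1-\lambda_n^k)^d \ge \lambda_n^k$. For the numerator I would exploit the ``$d$-root'' structure that $W = \sum_{i=1}^d X_i$ is a sum of $d$ independent geometrics $X_i \sim \GEO(1-\lambda_n^k)$, each with tail $\P(X_i \ge j) = (\lambda_n^k)^j$. If every summand is at most $n/d$ then $W \le n$, so a union bound yields the tail inequality we develop (\Cref{lem:d-root-convolution}), namely $\P(W > n) \le d\,(\lambda_n^k)^{n/d}$, which is far sharper than a Chernoff-type bound in this high-success-probability regime. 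Combining the two estimates and using $k \ge 1$ gives
\[
  \frac{\P(W > n)}{\P(W \ge 1)} \le d\,(\lambda_n^k)^{\,n/d - 1} \le d\,\lambda_n^{\,n/d - 1},
\]
so it remains to show that $\lambda_n^{\,n/d}$ decays to $0$ fast enough.

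This is where the growth rate of the tuning parameter enters, and I expect it to be the main obstacle. Setting $\delta_n = -\log\lambda_n$, I would lower bound $\E_{\lambda_n}[U_n]$ from \Cref{lem:expectation2} by discarding all but the leading geometric-series term, $\E_{\lambda_n}[U_n] = \sum_k k b_k \lambda_n^k/(1-\lambda_n^k) \ge \sum_k k b_k \lambda_n^k$, and then invoke the singularity estimate $\sum_k k b_k x^k \sim a_r (r+1)!\,(1-x)^{-(r+2)}$ as $x \to 1^-$, where $b_k = \binom{k+d-1}{d-1}$ has degree $r = d-1$. Since $\E_{\lambda_n}[U_n] = n$ by the definition of $\lambda_n$, this forces $1 - \lambda_n = \Omega(n^{-1/(r+2)})$, and hence $\delta_n \ge 1 - \lambda_n = \Omega(n^{-1/(r+2)})$, so that $(n/d)\delta_n = \Omega(n^{(r+1)/(r+2)}) \to \infty$. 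Therefore $\lambda_n^{\,n/d - 1} = e^{-(n/d - 1)\delta_n} \to 0$, the displayed ratio tends to $0$, and $\P_{\lambda_n^k}(V \le n) \ge 1/2$ for all $k \in [n]$ once $n$ is large.

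The two points I would want to treat carefully are keeping the tail-and-truncation bookkeeping uniform in $k$ (the chain above already delivers a bound independent of $k$) and making the singularity asymptotic for $\sum_k k b_k x^k$ rigorous. The latter is routine rather than deep: $\sum_k b_k x^k$ is a rational function with a pole of order $r+1$ at $x=1$, so one differentiation together with the standard expansion of $(1-x)^{-(r+2)}$ suffices, and no full Meinardus-style analysis is required for this one-sided estimate.
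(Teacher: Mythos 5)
Your proposal is correct in outline and reaches the lemma by a genuinely different route than the paper. The paper first reduces to the case $k=1$ via a separate monotonicity lemma (\Cref{lem:x_n-case}, phrased through the regularized incomplete beta function), imports both an upper and a lower bound on $\delta_n=-\log\lambda_n$ from the asymptotic formula (32) of Granovsky--Stark (\Cref{lem:x_n-bounds}), and then bounds $\P_{\lambda_n}(V\le n)\ge(1-\lambda_n^{n/d})^d-(1-\lambda_n)^d$, which forces it to use \emph{both} directions of control: $\lambda_n^{n/d}\to0$ and $\lambda_n\to1$. Your ratio formulation $\P(V\le n)=1-\P(W>n)/\P(W\ge1)$, with the denominator bound $\P(W\ge1)=1-(1-\lambda_n^k)^d\ge\lambda_n^k$ and the union-bound tail $\P(W>n)\le d(\lambda_n^k)^{n/d}$ (a mild weakening of \Cref{lem:d-root-convolution} via Bernoulli's inequality), gives $d(\lambda_n^k)^{n/d-1}\le d\,\lambda_n^{n/d-1}$, which is uniform in $k$ and only improves as $k$ grows, so you dispense with \Cref{lem:x_n-case} entirely; moreover you need only the one-sided estimate $\delta_n=\Omega(n^{-1/(d+1)})$, extracted directly from the tuning equation rather than from the Khintchine--Meinardus machinery. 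That makes your argument more self-contained than the paper's.

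The one step that does not work as written is your derivation of $1-\lambda_n=\Omega(n^{-1/(d+1)})$. \Cref{lem:expectation2} gives $n=\E_{\lambda_n}[U_n]=\sum_{k=1}^{n}kb_k\lambda_n^k/(1-\lambda_n^k)$, a \emph{truncated} sum, whereas the singularity asymptotic $\sum_{k\ge1}kb_kx^k\sim d(1-x)^{-(d+1)}$ concerns the full series, which upper-bounds the truncated one rather than lower-bounding it; so the chain ``$n\ge\sum_k kb_k\lambda_n^k\gtrsim(1-\lambda_n)^{-(d+1)}$'' does not literally follow. The repair is routine: restrict the sum to $1\le k\le K$ with $K=\min(n,\lceil1/\delta_n\rceil)$, where $\lambda_n^k=e^{-k\delta_n}\ge e^{-2}$, and use $b_k\ge k^{d-1}/(d-1)!$ to get $n\ge cK^{d+1}$; the case $K=n$ is impossible for large $n$, and the case $K\ge1/\delta_n$ yields $\delta_n\ge(c/n)^{1/(d+1)}$, hence $(n/d-1)\delta_n\to\infty$ as you need. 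With that adjustment your proof is complete.
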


We use three lemmas to prove \Cref{lem:middle-rejection}.
First, \Cref{lem:x_n-case} shows that it suffices to lower bound the success probability
of \Cref{alg:bec-subroutine} for $k=1$ instead of all $k \in [n]$.
Recall that the probability mass function for $W$ is
\begin{equation}\label{eqn:nb-pmf}
  \P_{\lambda}(W=k) = \binom{k+d-1}{d-1}\lambda^k(1-\lambda)^d,
\end{equation}
The cumulative distribution function for $W$ is
\begin{equation}\label{eqn:nb-cdf}
  \P_{\lambda}(W \le k) = 1 - I_\lambda(k+1, d),
\end{equation}
where $I_\lambda(a, b)$ is the regularized incomplete beta function defined as
\[
  I_\lambda(a, b) \DEF \frac{B_\lambda(a, b)}{B_1(a, b)},
\]
with
\[
  B_\lambda(a, b) \DEF \int_{0}^\lambda t^{a-1} (1-t)^{b-1} \textrm{d}t.
\]



\begin{lemma}\label{lem:x_n-case}
For all $k \in [n]$, we have
\[
  \P_{\lambda_n^k}(V \le n) \ge \P_{\lambda_n}(V \le n).
\]
\end{lemma}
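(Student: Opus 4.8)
The plan is to prove the stronger statement that $\P_\lambda(V \le n)$ is nonincreasing in the success probability $\lambda$ on $(0,1)$, and then specialize to the two points $\lambda_n^k$ and $\lambda_n$. First I would record that the tuned parameter satisfies $\lambda_n \in (0,1)$: the generating function $\prod_{k \ge 1}(1-z^k)^{-b_k}$ of any weighted partition class with $b_1 \ge 1$ already contains the factor $(1-z)^{-b_1}$, so its radius of convergence is $\rho_C = 1$ and $\lambda_n \in (0,\rho_C) = (0,1)$. Hence $\lambda_n^k \le \lambda_n$ for every $k \in [n]$, and the lemma follows immediately once we show that $0 < \lambda_1 \le \lambda_2 < 1$ implies $\P_{\lambda_1}(V \le n) \ge \P_{\lambda_2}(V \le n)$.

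For this monotonicity I would use a monotone-likelihood-ratio argument, which neatly avoids the fact that differentiating the truncated cumulative distribution function in $\lambda$ produces terms of competing signs. Using the mass function of $V \sim \NB_{\ge 1}(d,\lambda)$ from \Cref{eqn:nb-pmf} restricted to $k \ge 1$, namely
\[
  \P_\lambda(V = k) = \frac{\binom{k+d-1}{d-1}\lambda^k(1-\lambda)^d}{1 - (1-\lambda)^d},
\]
the likelihood ratio between parameters $\lambda_1 < \lambda_2$ is
\[
  \frac{\P_{\lambda_2}(V = k)}{\P_{\lambda_1}(V = k)} = c(\lambda_1, \lambda_2)\parens*{\frac{\lambda_2}{\lambda_1}}^{k},
\]
where $c(\lambda_1,\lambda_2)$ gathers every factor that does not depend on $k$. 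Since $\lambda_2/\lambda_1 > 1$, this ratio is increasing in $k$, so the laws $\NB_{\ge 1}(d,\lambda)$ are ordered in the monotone-likelihood-ratio sense. Invoking the standard implication that monotone likelihood ratio dominance forces first-order stochastic dominance gives $\P_{\lambda_2}(V \le n) \le \P_{\lambda_1}(V \le n)$ for every $n$, and taking $\lambda_1 = \lambda_n^k$ and $\lambda_2 = \lambda_n$ finishes the argument.

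The only point requiring care is the zero-truncation, and it is exactly what makes the monotone-likelihood-ratio route attractive: truncating $W$ to $W \ge 1$ only multiplies the mass function by the $k$-independent constant $1/(1-(1-\lambda)^d)$, so it is absorbed into $c(\lambda_1,\lambda_2)$ and cannot spoil the monotonicity in $k$. If one prefers to stay within the incomplete-beta formalism already introduced in \Cref{eqn:nb-cdf}, the same conclusion can be reached directly: one checks that $\P_\lambda(V > n) = I_\lambda(n+1, d)/I_\lambda(1, d)$, so up to a positive constant it equals the ratio of truncated integrals $\int_0^\lambda t^n(1-t)^{d-1}\,\mathrm{d}t / \int_0^\lambda (1-t)^{d-1}\,\mathrm{d}t$, whose $\lambda$-derivative is proportional to $\int_0^\lambda (1-t)^{d-1}\parens*{\lambda^n - t^n}\,\mathrm{d}t \ge 0$ on $[0,\lambda]$. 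Either way, $\P_\lambda(V \le n)$ decreases in $\lambda$, which is all we need.
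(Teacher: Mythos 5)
Your proof is correct, and your primary route is genuinely different from the paper's. The paper works entirely through the incomplete-beta representation of the negative binomial CDF from \Cref{eqn:nb-cdf}: it rewrites $\P_{\lambda}(V \le n) = 1 - I_{\lambda}(n+1,d)/(1-(1-\lambda)^d)$ and then asserts the monotonicity of this expression in $\lambda$ from the positivity of the beta integrand together with $0 < \lambda_n < 1$. That justification is terse — positivity of the integrand only shows that numerator and denominator are each increasing in $\lambda$, which by itself does not determine the direction of their ratio — and the derivative computation you give in your closing remark, showing $\frac{\mathrm{d}}{\mathrm{d}\lambda}\bigl(\int_0^\lambda t^n(1-t)^{d-1}\,\mathrm{d}t \big/ \int_0^\lambda (1-t)^{d-1}\,\mathrm{d}t\bigr) \propto \int_0^\lambda (1-t)^{d-1}(\lambda^n - t^n)\,\mathrm{d}t \ge 0$, is exactly the missing step that makes the paper's argument rigorous. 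Your main argument via monotone likelihood ratios is cleaner and more self-contained: the binomial coefficient cancels in the ratio $\P_{\lambda_2}(V=k)/\P_{\lambda_1}(V=k) = c(\lambda_1,\lambda_2)(\lambda_2/\lambda_1)^k$, the zero-truncation constant is absorbed into $c$, and the standard implication from likelihood-ratio ordering to first-order stochastic dominance yields the claim without touching special functions. What the MLR route buys is a statement that is manifestly uniform in $n$ and transparent about why truncation is harmless; what the paper's route buys is that it stays within the incomplete-beta formalism already set up for \Cref{eqn:nb-cdf}. Your observation that $\lambda_n \in (0,1)$ (so that $\lambda_n^k \le \lambda_n$) is also needed and correctly justified; the paper uses the same fact implicitly.
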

\begin{proof}
Using \Cref{eqn:nb-pmf} and \Cref{eqn:nb-cdf}, observe that 
\begin{align*}
  \P_{\lambda_n^k}(V \le n) &=
  \frac{\P_{\lambda_n^k}(W \le n) - \P_{\lambda_n^k}(W = 0)}{1 - \P_{\lambda_n^k}(W = 0)}\\
  &=\frac{(1-I_{\lambda_n^k}(n+1,d)) - (1-\lambda_n^k)^d}{1 - (1 - \lambda_n^k)^d}\\
  &= 1 - \frac{I_{\lambda_n^k}(n+1,d)}{1 - (1-\lambda_n^k)^d}.
\end{align*}
Since the integrand of the beta function
$t^{a-1}(1-t)^{b-1}$ is positive on $(0,1)$ and $0 < \lambda_n < 1$, we have
\[
  1 - \frac{I_{\lambda_n^k}(n+1,d)}{1 - (1-\lambda_n^k)^d}
  \ge  1 - \frac{I_{\lambda_n}(n+1,d)}{1 - (1-\lambda_n)^d},
\]
for all $k \in [n]$.
\end{proof}

\noindent
Thus, we only need to analyze the rejection rate when sampling
from the zero-truncated distribution $\NB_{\ge 1}(d, \lambda_n)$.

Second, \Cref{lem:x_n-bounds} gives an upper bound and lower bound for $\lambda_n$
using an asymptotic formula from the Khintchine--Meinardus probabilsitic method.

\begin{lemma}\label{lem:x_n-bounds}
For $\MSET(\MSET_{1..n}(d\cZ))$ and $n$ sufficiently large, we have
\[
  \exp\parens*{-2 n^{-\frac{1}{d+1}}} \le \lambda_n
 \le \exp\parens*{- \frac{1}{2} n^{-\frac{1}{d+1}}}.
\]
\end{lemma}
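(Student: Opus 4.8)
The plan is to write $\lambda_n = e^{-\delta_n}$ with $\delta_n > 0$ and to pin down how $\delta_n$ scales with $n$. By the tuning condition together with \Cref{lem:expectation2}, $\delta_n$ is the unique solution of $\sum_{k=1}^n k b_k \frac{e^{-k\delta}}{1 - e^{-k\delta}} = n$, where $b_k = \binom{k+d-1}{d-1}$; uniqueness comes from the monotonicity in \Cref{eqn:strictly-increasing}, so it suffices to control the order of $\delta_n$ and then exponentiate. Since $b_k$ is a positive integer sequence of degree $r = d-1$, \Cref{lem:main-lemma} places its rightmost pole at $\rho_r = r+1 = d$ with residue $A_r = 1/(d-1)!$.

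First I would establish the asymptotic $\E_{e^{-\delta}}[U_n] \sim \kappa\,\delta^{-(d+1)}$ as $\delta \to 0^{+}$, where $\kappa \DEF A_r\,\Gamma(d+1)\,\zeta(d+1) = d\,\zeta(d+1)$. Expanding $\frac{e^{-k\delta}}{1-e^{-k\delta}} = \sum_{j \ge 1} e^{-jk\delta}$ and interchanging the order of summation, the full (untruncated) sum has Mellin transform $\Gamma(s)\,\zeta(s)\,D(s-1)$ in the variable $\delta$; its rightmost singularity is the simple pole of $D(s-1)$ at $s = d+1$, whose residue produces the leading term $\kappa\,\delta^{-(d+1)}$. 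This is precisely the mean computation underlying \Cref{thm:local-limit}. Truncating the summand index at $k = n$ does not affect the leading order: once one knows $\delta_n = \Theta(n^{-1/(d+1)})$, the discarded tail is at most $\mathrm{poly}(n)\,e^{-n\delta_n}$, which is negligible because $n\delta_n = \Theta(n^{d/(d+1)}) \to \infty$. The mild bootstrap here (bound $\E_{\lambda_n}[U_n]$ above by the full sum to fix the scale of $\delta_n$, then feed that back to discard the tail) is routine.

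Inverting the asymptotic at $\E_{\lambda_n}[U_n] = n$ gives $\kappa\,\delta_n^{-(d+1)} \sim n$, hence $\delta_n = (1+o(1))\,C_d\, n^{-1/(d+1)}$ with $C_d \DEF (d\,\zeta(d+1))^{1/(d+1)}$. The final step is to verify that $C_d$ lands strictly inside $(1/2, 2)$ for every integer $d \ge 1$, leaving room to absorb the $1+o(1)$ factor. The lower bound is immediate: $\zeta(d+1) > 1$ forces $d\,\zeta(d+1) > 1$, so $C_d > 1 > 1/2$. For the upper bound I would use $\zeta(d+1) \le \zeta(2) < 2$ together with $d < 2^d$ to get $d\,\zeta(d+1) < 2d < 2^{d+1}$, whence $C_d < 2$. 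Since these inequalities are strict (and in fact $C_d$ stays near $1.3$), for $n$ sufficiently large we obtain $\tfrac{1}{2} n^{-1/(d+1)} \le \delta_n \le 2\, n^{-1/(d+1)}$, and exponentiating (using that $\delta \mapsto e^{-\delta}$ is decreasing) yields the claimed sandwich for $\lambda_n$.

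The main obstacle is the first step: making $\E_{e^{-\delta}}[U_n] \sim \kappa\,\delta^{-(d+1)}$ rigorous with an explicit error term, in particular justifying the interchange of the double sum, the Mellin transfer from the rightmost pole, and the fact that truncating at index $n$ leaves the leading constant intact. Everything afterward — inverting a monotone asymptotic relation and checking the elementary bound $1 < C_d < 2$ — is straightforward.
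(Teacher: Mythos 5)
Your proposal is correct and lands on exactly the same skeleton as the paper's proof: write $\lambda_n = e^{-\delta_n}$, establish $\delta_n \sim \left(d\,\zeta(d+1)\right)^{1/(d+1)} n^{-1/(d+1)}$, and then check by an elementary estimate that the constant lies strictly inside $(1/2,2)$ so the $1+o(1)$ factor can be absorbed for large $n$ (the paper does this with $1 \le \zeta(d+1) \le \pi^2/6$ and $\varepsilon = 1/3$; your $\zeta(d+1) < 2$ and $d < 2^d$ check is equivalent and equally valid). The one real difference is how the asymptotic for $\delta_n$ is obtained: the paper simply cites Equation~(32) of \cite{GS12}, which states $\delta_n \sim \left(A_r\Gamma(\rho_r)\zeta(\rho_r+1)\rho_r\right)^{1/(\rho_r+1)} n^{-1/(\rho_r+1)}$, and then substitutes $\rho_r = d$, $A_r = 1/(d-1)!$; note $\Gamma(\rho_r)\rho_r = \Gamma(d+1)$, so this is literally your constant $\kappa$. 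You instead propose to re-derive this from scratch via the Mellin transform $\Gamma(s)\zeta(s)D(s-1)$ of the untruncated mean, plus a bootstrap to handle the truncation at $k=n$. Your sketch of that derivation is sound (the rightmost pole at $s = d+1$ does contribute $d\,\zeta(d+1)\,\delta^{-(d+1)}$, and the discarded tail is exponentially small once $n\delta_n \to \infty$), but it is precisely the part you leave at sketch level, and it is exactly the work the citation buys the paper for free. So: self-containedness versus a one-line appeal to \cite{GS12}; if you carried out the Mellin and bootstrap details your proof would be complete and independent of that reference, whereas the paper's version is shorter but inherits the asymptotic as a black box.
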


\begin{proof}
Let $\lambda_n = e^{-\delta_n}$.
Equation (32) in \cite{GS12} asserts that as $n \rightarrow \infty$,
\begin{equation*}
  \delta_n \sim \parens*{A_r \Gamma(\rho_r) \zeta(\rho_r + 1) \rho_r}^{\frac{1}{\rho_r + 1}}
                n^{-\frac{1}{\rho_r + 1}}.
\end{equation*}
In this instance, $r=d-1$ and $\rho_r = d$.
It follows that
$A_r = 1/(d-1)!$ and $\Gamma(\rho_r) = (d-1)!$, so
\[
  \delta_n \sim (\zeta(d+1) d)^{\frac{1}{d+1}} n^{-\frac{1}{d+1}},
\]
as $n \rightarrow \infty$.
Therefore, for any $\varepsilon > 0$ and $n$ sufficiently large, we have
\begin{align*}
  \delta_n \ge (1-\varepsilon)(\zeta(d+1) d)^{\frac{1}{d+1}} n^{-\frac{1}{d+1}},
\end{align*}
and
\begin{align*}
  \delta_n &\le  (1+\varepsilon)(\zeta(d+1) d)^{\frac{1}{d+1}} n^{-\frac{1}{d+1}}.
\end{align*}
The result follows from the fact $1 \le \zeta(d+1) \le \pi^2/6$ and by
letting $\varepsilon = 1/3$.
\end{proof}


Third, 
\Cref{lem:d-root-convolution} is a
tail inequality for negative binomial random variables.
Although our derivation is easily understood using standard techniques in 
enumerative combinatorics, the inequality captures an ample
amount of probability mass for all integers $n \ge 0$.
It is empirically much tighter than a Chernoff-type inequality in this setting as
$n \rightarrow \infty$ and $\lambda_n \rightarrow 1$.

\begin{lemma}[Negative Binomial Tail Inequality]
\label{lem:d-root-convolution}
Assume $W \sim \NB(d, \lambda)$. For all integers $n \ge 0$, 
\[
  \P(W > n) \le 1 - \parens*{1-\lambda^{n/d}}^d.
\]
\end{lemma}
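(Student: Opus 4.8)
The plan is to exploit the stars-and-bars decomposition already used to justify \Cref{alg:bec-subroutine}: a random variable $W \sim \NB(d,\lambda)$ counting the number of atoms generated before the $d$-th bar is exactly the sum $W = X_1 + \dots + X_d$ of $d$ independent geometric random variables, where $X_i$ records the number of atoms of color~$i$. In the paper's convention each $X_i \sim \GEO(1-\lambda)$, with $\P(X_i = k) = \lambda^k(1-\lambda)$; equivalently, this factorization is visible from the probability generating function, since $\E[z^W] = \parens*{\frac{1-\lambda}{1-\lambda z}}^d$ is the $d$-th power of the geometric generating function $\frac{1-\lambda}{1-\lambda z}$. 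I would open the proof by recording this decomposition.

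The key step is a containment of events. If every color contributes few atoms, the total is small: whenever $X_i \le \floor{n/d}$ for all~$i$, we have $W \le d\floor{n/d} \le n$. Hence $\set*{X_i \le \floor{n/d} \text{ for all } i} \subseteq \set*{W \le n}$, and by independence of the $X_i$,
\[
  \P(W \le n) \ge \prod_{i=1}^d \P\!\parens*{X_i \le \floor{n/d}} = \P\!\parens*{X_1 \le \floor{n/d}}^d.
\]

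What remains is a one-line geometric tail computation and a monotonicity bound. Summing the geometric density gives $\P(X_1 \le m) = \sum_{k=0}^m \lambda^k(1-\lambda) = 1 - \lambda^{m+1}$. Taking $m = \floor{n/d}$ and using $\floor{n/d} + 1 \ge n/d$ together with $\lambda \in (0,1)$ yields $\P\!\parens*{X_1 \le \floor{n/d}} = 1 - \lambda^{\floor{n/d}+1} \ge 1 - \lambda^{n/d}$. Substituting into the product bound gives $\P(W \le n) \ge \parens*{1-\lambda^{n/d}}^d$, which is the claim after taking complements.

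The only subtlety, and the step I would be most careful with, is the floor bookkeeping: the real exponent $n/d$ appearing in the statement is compared against the integer threshold $\floor{n/d}+1$, and one must verify the inequality in the favorable direction ($\floor{n/d}+1 \ge n/d$, so that $\lambda^{\floor{n/d}+1} \le \lambda^{n/d}$ because $\lambda < 1$) rather than the reverse. The containment $d\floor{n/d} \le n$ likewise relies on the floor pointing the right way. Beyond this the argument is elementary and uses no analytic input, which matches the remark that the bound is much tighter than a Chernoff estimate as $\lambda \rightarrow 1$: it tracks the exact geometric tails of the summands rather than bounding an exponential moment.
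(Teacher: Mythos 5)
Your proof is correct and is essentially the paper's argument in probabilistic dress: the event containment $\set*{X_i \le \floor{n/d} \text{ for all } i} \subseteq \set*{W \le n}$ for independent geometrics is, after multiplying through by $(1-\lambda)^d$, exactly the paper's weak-$d$-composition inequality $\sum_{k=0}^n \binom{k+d-1}{d-1}\lambda^k \ge \parens*{\sum_{k=0}^{\floor{n/d}}\lambda^k}^d$, with the $d$-fold product structure of $\NB(d,\lambda)$ playing the role of the truncated generating function raised to the $d$-th power. The closing floor bookkeeping ($d\floor{n/d} \le n$ and $\lambda^{\floor{n/d}+1} \le \lambda^{n/d}$ for $\lambda \in (0,1)$) likewise matches the paper's final step, so no gap remains.
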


\begin{proof}
Let $m = \floor{n/d}$ and observe that
\[
  \sum_{k=0}^n \binom{k+d-1}{d-1}\lambda^k \ge \parens*{\sum_{k=0}^{m} \lambda^k}^d.
\]
This inequality has a direct combinatorial interpretation in terms of weak
$d$-compositions. The left-hand side is the truncated ordinary generating
function for weak compositions of $k$ into $d$ parts, and the right-hand side
is the truncated ordinary generating function for weak compositions of $k$ into
$d$ parts of size at most $m$.
It follows that
\begin{align*}
  \parens*{\sum_{k=0}^{m} \lambda^k}^d &=
  \parens*{\frac{1-\lambda^{m+1}}{1-\lambda}}^d\\
  &\ge \parens*{1-\lambda^{n/d}}^d (1-\lambda)^{-d},
\end{align*}
since $0 < \lambda < 1$. Therefore, by \Cref{eqn:nb-pmf} we have
\begin{align*}
  \P(W > n) &= 1 - \P(W \le n)\\
   &= 1 - \sum_{k=0}^n \binom{k+d-1}{d-1}\lambda^k(1-\lambda)^d\\
  &\le 1 - \parens*{1-\lambda^{n/d}}^d,
\end{align*}
as desired.
\end{proof}

Combining the prerequisite lemmas and using the definition of the
probability mass function for the negative binomial distribution, we prove
\Cref{lem:middle-rejection}.

\begin{proof}[Proof of \Cref{lem:middle-rejection}.]
It suffices to consider the case when $k=1$ by \Cref{lem:x_n-case}.
By \Cref{lem:d-root-convolution} and \Cref{eqn:nb-pmf},
\begin{align*}
  \P_{\lambda_n}(V \le n) &= \frac{\P_{\lambda_n}(W \in [n])}{\P_{\lambda_n}(W \ge 1)}\\
     &\ge \P_{\lambda_n}(W \le n) - \P_{\lambda_n}(W = 0)\\
     &\ge \parens*{1-\lambda_n^{n/d}}^d - (1-\lambda_n)^d.
\end{align*}
Substituting the upper and lower bounds for $\lambda_n$ given in
\Cref{lem:x_n-bounds}, it follows that
\begin{align*}
  \parens*{1-\lambda_n^{n/d}}^d - (1-\lambda_n)^d
     &\ge \parens*{1 - \exp\parens*{-\frac{1}{2d}n^{\frac{d}{d+1}}}}^d\\
       &\phantom{\ge}\hspace{0.05cm} - \parens*{1-\exp\parens*{-2n^{-\frac{1}{d+1}}}}^d\\
     &\ge \frac{1}{2},
\end{align*}
for $n$ sufficiently large, because
\[
  \lim_{n\rightarrow\infty} \exp\parens*{-n^{\frac{d}{d+1}}} = 0,
\]
and
\[
  \lim_{n\rightarrow\infty} \exp\parens*{-n^{-\frac{1}{d+1}}} = 1.
\]
This completes the proof.
\end{proof}


\section{Conclusion}
We have shown how to analyze the complexity of Boltzmann samplers for a family of
weighted partitions (including Bose--Einstein condensates) through the
singularity analysis of Dirichlet generating functions.
In particular, we relate the degree of the polynomial parameterizing the
sequence~$(b_k)_{k=1}^\infty$ to the rejection rate of our algorithms through
a local limit theorem in~\cite{GS12}.
The main observation in our analysis is that the Dirichlet generating function
for a positive integer sequence of degree $r$ is a linear combination of
shifted Riemann zeta functions.
This allows us to use results from analytic number theory to
conveniently analyze the poles of these functions.
Other ideas in our analysis using a Newton interpolating polynomial to bound
residues, and developing a negative binomial tail inequality to analyze an
intermediate rejection rate.

Future directions of this work include analyzing these algorithms in
the interval or floating-point arithmetic models of computation,
instead of the real-arithmetic model.
The primary question to address is how accurate the approximation of
$\lambda_n$ must be in order to maintain a similar rejection rate,
since $\lambda_n$ approaches an essential singularity of
the generating function $C(z)$.

\section*{Acknowledgments}
We thank Marcel Celaya for various helpful discussions and
the anonymous reviewers of an earlier version of this paper for their insightful comments and suggestions.

\bibliographystyle{plain}
\bibliography{references}


\end{document}